 \documentclass[11pt]{article}
\usepackage[paper=a4paper,textwidth=162mm, textheight=238mm]{geometry}
\usepackage{amsmath, amsthm}
\usepackage{amssymb}
\usepackage{amsfonts}
\usepackage{dsfont}
  \usepackage{epsfig} 
\usepackage{graphicx}  \usepackage{epstopdf}
 \usepackage[colorlinks=true]{hyperref}
\hypersetup{urlcolor=blue, citecolor=red}

\newtheorem{theorem}{Theorem}[section]
\newtheorem{corollary}{Corollary}

\newtheorem{lemma}[theorem]{Lemma}
\newtheorem{proposition}{Proposition}

\theoremstyle{definition}
\newtheorem{definition}[theorem]{Definition}
\newtheorem{remark}{Remark}

\def \A{\mathbb{A}}
\def \E{\mathbb{E}}
\def \F{\mathbb{F}}
\def \R{\mathbb{R}}
\def \N{\mathbb{N}}
\def \P{\mathbb{P}}
\def \Q{\mathbb{Q}}
\def \Fc{\mathcal{F}}
\def \Pc{\mathcal{P}}
\def \Tc{\mathcal{T}}
\def \demi{\frac{1}{2}}
\usepackage{dsfont}
\def \ind#1{\mathds{1}_{\{#1\}}}

\def \x{\times}
\def \And{\;\mbox{ and }\;}

\def \cadlag{c\`adl\`ag }

\usepackage{authblk}
\title{Investment under uncertainty, competition and regulation}
\author[1]{Adrien Nguyen Huu\thanks{Corresponding author: adrien.nguyenhuu@gmail.com}}
\affil[1]{{\small{ \it IMPA, Estrada Dona Castorina 110, Rio de Janeiro 22460-320, Brasil}}}

\begin{document}
\maketitle

\begin{abstract}
We investigate
a randomization procedure
undertaken in real option games
which can serve as a basic model
of regulation
in a duopoly model of
preemptive investment.
We recall the rigorous
framework of M. Grasselli, 
V. Lecl\`{e}re and M. Ludkovsky 
(\textit{Priority Option: the value of being a leader}, 
International Journal of Theoretical and Applied Finance, \textbf{16}, 2013),
and extend it to a random
regulator.
This model generalizes
and unifies the different
competitive frameworks proposed
in the literature,
and creates a new one
similar to a Stackelberg leadership.
We fully characterize
strategic interactions
in the several situations following
from the parametrization of the regulator.
Finally, we study the effect of 
the coordination game and uncertainty of outcome
when agents are risk-averse,
providing new intuitions for
the standard case.
\end{abstract}

\section{Introduction}

A significant progress on the net present valuation
method has been made
by real option theory
for new investment valuation.
The latter uses
recent methods from stochastic finance
to price uncertainty
and competition in complex investment situations.
Real option games
are the part of this theory
dedicated to the competitive dimension.
Following the seminal work of Smets \cite{smets1993},
they correspond to
an extensively studied situation,
where two or more economic agents
face with time
a common project to invest into,
and where there might be
an advantage of being 
a leader
(\emph{preemptive} game) 
or a follower
(\emph{attrition} game).
Grenadier \cite{grenadier1996},
Paxson and Pinto \cite{paxson2005}
or Weeds \cite{weeds2002} among many others
develop nice examples of this type of model.
On these problems
and the related literature,
Azevedo and Paxson
\cite{azavedo2010} or
Chevalier-Roignant \textit{et al.} \cite{chevalier2011} provide
comprehensive reviews.
Real option games are especially dedicated
to new investment opportunities
following R\&D developments,
like technological products,
a new drug or even
real estate projects.
The later investment opportunities
indeed add competition risk to 
the uncertainty of future revenues
in the strategic behavior of investors.
The regulatory dimension
is also an important part
of such investment projects,
and the present paper attempts
to contribute in this direction.

The economic situation can be the following.
We consider an investment opportunity
on a new market for an homogeneous good,
available for two agents labeled one and two.
This investment opportunity is
not constrained in time,
but the corresponding market 
is regulated by an
outside agent.
Both agents have the same model
for the project's revenues, 
and both have also access to
a financial market with one risky asset (a market portfolio)
and a risk-free asset (a bank account).
The situation is a non-cooperative
duopoly of preemptive timing game for
a single investment oppotunity.

A simple approach to regulation
is to consider that
when an agent wants to
take advantage of such an opportunity,
he must satisfy some non-financial criteria
which are scrutinized by the regulator.
Even though a model
of regulatory decision
is at the heart of this matter,
we will take the most simple and naive approach.
We will just assume that the regulator
can accept or refuse an agent to
proceed into the investment project,
and that this decision is taken randomly.
This extremely simple model
finds its root in a widespread
convenient assumption.
In the standard real option game
representing a Stackelberg duopoly,
the settlement of leadership and followership
is decided via the flipping of an even coin,
independent of the model.
The procedure is used
for example in Weeds \cite{weeds2002},
Tsekeros \cite{tsekrekos2003}
or Paxson and Pinto \cite{paxson2005}
who all refer to 
Grenadier \cite{grenadier1996}
for the justification, 
which appears in a footnote
for a real estate investment project:
\begin{quotation}
A potential rationale for this assumption 
is that development requires approval from 
the local government. 
Approval may depend on 
who is first in line or 
arbitrary considerations.
\end{quotation}
This justification
opens the door
to many alterations
of this assumption,
inspired from other similar
economic situations:
the aforementioned considerations 
might lead to favor one competitor on the other or
lead to a different outcome.

Let us take a brief moment
to describe one of those economic situations we have in mind.
Assume that two economic agents
are running for
the investment in a project
with possibility 
of simultaneous investment,
as in Grasselli \textit{et al.} \cite{grasselli2013}.
In practice,
even if they are 
accurately described as symmetrical,
they would never act at 
the exact same time,
providing that instantaneous
action in a continuous time
model is just an idealized
situation.
Instead,
they
show their intention
to invest to a third party
(a regulator, a public institution)
at approximatively the same time.
An answer to a call for tenders
is a typical example of such interactions.
Then the arbitrator
is invoked to 
judge the validity of these intentions.
For example,
he can evaluate
which agent is the most suitable
to be granted the project as a leader
regarding qualitative criteria.
This situation might be
illustrated in particular
where environmental or
health exigences are in line.
When simultaneous investment is impossible,
the real estate market example
of Grenadier \cite{grenadier1996} can also be
cited again bearing in mind
that, in addition to safety constraints,
an aesthetic or confidence
indicator can intervene in 
the decision of a market regulator.
Because these criteria can be numerous,
an approach via randomness
of the decision is a decent first approximation.
In the extremal case,
the arbitrator can be biaised and
show his preference for on agent.
In general, it leads to an asymmetry
in the chance to be elected leader
and an unhedgeable risk,
and perfectly informed agents
should take into account
this fact
into their decision to invest or
to defer.
Those are some of the
situations
the introduction of a
random regulator
can shed some light
on.

Our model is presented
in a very particular way by using a normal form game
in an extended natural virtual time,
as initially proposed by Fudenberg and Tirole \cite{fudenberg1985},
and followed by Thijssen and \textit{al.} \cite{thijssen2002}
or Grasselli and \textit{al.} \cite{grasselli2013}.
This model bears a very specific interpretation,
but appears to
encompass in a continuous manner
the Cournot competition setting
of Smets \cite{smets1993} and
the Stackelberg competition of
Grenadier \cite{grenadier1996}.
The introduction of a regulator also
renders the market incomplete.
We then follow two classical approaches,
namely the risk-neutral and the risk-averse cases respectively.
The latter case actually brings up
new insights on the strategic interactions
of players in the Cournot and Stackelberg setttings.
Against the existing literature,
the present contribution situates itself
as a contribution to mathematical clarifications
of the standard real option game, with new angles.

Let us present how
the remaining of the paper proceeds.
Section \ref{sec:game}
introduces the standard model
and its extension
to the random regulatory framework.
Section \ref{sec:nash}
provides the study
of the model and optimal strategies
in the general case.
In Section \ref{sec:singular},
we present how the proposed
model encompasses the usual types
of competition, and propose a new
asymmetrical situation related
to the Stackelberg competition framework.
Section \ref{sec:aversion}
introduces risk-aversion in agents evaluation
to study the effect of
a random regulator and the uncertain
outcome of a coordination game
on equilibrium strategies of
two opponents.
Section \ref{sec:conclusion}
finishes on critism and possible extentions.

\section{The model}
\label{sec:game}

\subsection{The investment opportunity}
\label{sec:market}

We assume for the moment that 
the regulator does not intervene.
The framework is thus the standard one
that one shall find for example in Grasselli \textit{et al.} \cite{grasselli2013}.
Notations and results of this section
follow from the latter.
The later research will then be 
cited repeatedly in the present article.

We consider
a stochastic basis $(\Omega, \Fc, \F, \P)$
where $\F:=(\Fc_t)_{t\ge0}$
is the filtration
of a one-dimensional Brownian motion
$(W_t)_{t\ge0}$, i.e., 
$\Fc_t := \sigma\{W_s, \ 0\le s\le t\}$.
The project delivers for
each involved agent
a random continuous stream of cash-flows
$(D_{Q(t)}Y_t)_{t\ge0}$.
Here,
$Y_t$ is the stochastic profit
per unit sold and
$D_{Q(t)}$ is the
quantity of sold units per agent
actively involved on the market,
when $Q(t)$ agents are actively participating at time $t$.
The increments of $(Q(t))_{t\ge 0}$
inform on the timing
decision of agents,
and considering a perfect information setting,
we assume that $(Q(t))_{t\ge 0}$
is a $\F$-adapted right-continuous process:
agents are informed of competitors entry in the market,
and expected future cash-flows can be computed.
It is natural to assume that
being alone in the market is
better than sharing it with a competitor:
\begin{equation}
\label{eq:demand}
0=:D_0<D_2<D_1 \;,
\end{equation}
but we also assume these quantities
to be known and constant.
The process $(Y_t)_{t\ge0}$ is 
$\F$-adapted,
non negative and
continuous
with dynamics given by
\begin{equation}
\label{eq:profit}
dY_t = Y_t( \nu dt + \eta dW_t), \quad t\ge 0\; ,
\end{equation}
with $(\nu,\eta)\in \R\x\R_+^*$.
We assume that $(Y_t)_{t\ge0}$
is perfectly correlated to
a liquid traded asset whose price dynamics is given 
by
\begin{equation}
\label{eq:price}
dP_t = P_t (\mu dt + \sigma dW_t) = P_t (rdt + \sigma dW^{\Q}_t)
\end{equation}
where 
$(\mu,\sigma)\in \R\x\R_+^*$, 
$r$ is the constant interest rate
of the risk-free bank account
available for both agents, 
and $W^{\Q}_t = W_t + \lambda t$
is a Brownian motion under the unique
risk-neutral measure $\Q\sim \P$
of the arbitrage-free market.
The variable $\lambda:=(\mu-r)/\sigma$ in
\ref{eq:price} is the Sharpe ratio.
The present financial setting is
thus the standard Black-Scholes-Merton
model \cite{blackscholes1973} of
a complete perfect financial market.

\begin{remark}
 \label{rem:units}
 The above setting defines $Y_t$ as a random
 unitary profit and $D$ as a fixed quantity, 
 whereas Grenadier \cite{grenadier2000} 
 and Grasselli \textit{et al.}\cite{grasselli2013}
 labelled $Y_t$ and $D$ as the demand
 and the inverse demand curve respectively.
 This model choice has no incidence in complete market
 and Sections \ref{sec:follower} and \ref{sec:leader} herefater.
 The choice will be discussed in Remark \ref{rem:incomplete}
 regarding the introduction of the regulator in Section \ref{sec:regulator}.
\end{remark}

\subsection{The follower's problem}
\label{sec:follower}

In this setting,
since
$D_{Q(t)}$ takes
known values,
the future stream of cash-flows 
can be evaluated under the
risk-neutral measure under which 
\begin{equation}
\label{eq:risk neutral Y}
dY_t = Y_t ((\nu-\eta\lambda) dt + \eta dW^{\Q}) \; .
\end{equation}
Assume that one of the two agents,
say agent one,
desires to invest at time $t$
when $Y_t=y$.
If $Q(t)=1$, 
then the available market
for agent one is $D_2$.
The risk-neutral expectation of
the project's discounted cash flows are then given by
\begin{equation}
\label{eq:VF}
V^F(t,y) 
:= \E^{\Q} \left[ \int_t^{\infty}e^{-r(s-t)}D_2 Y_s ds \right]
= \frac{D_2 y}{\eta \lambda - (\nu - r)} 
= \frac{D_2 y}{\delta}
\end{equation}
with $\delta:= \eta \lambda - (\nu - r)$.
We assume from now on $\delta>0$.
Now to price the
investment option of a follower,
we recall that agent one
can wait to invest as long as he wants.
We also recall the sunk cost 
$K$ at time $\tau$
he invests.
In the financial literature,
this is interpreted as
a Perpetual American call option
of payoff $(D_2Y_{\tau}/\delta - K)^+$.
The value function of this
option is given by
\begin{equation}
\label{eq:follower}
F(t,y):=
\sup\limits_{\tau\in \Tc_t}\E^{\Q}\left[
	e^{-r(\tau-t) }\left(
		\frac{D_2 Y_{\tau}}{\delta}-K\right)^+ \ind{\tau<+\infty} | \Fc_t
	\right]
\end{equation}
where $\Tc_t$ denotes the collection of
all $\F$-stopping times with values in
$[t,\infty]$.
The solution to \ref{eq:follower}
is well-known
in the literature, see Huang and Li \cite{huang1991} and Grenadier \cite{grenadier1996}.
A formal recent proof can be found in
Grasselli \textit{et al.} \cite{grasselli2013}.

\begin{proposition}[Prop.1, \cite{grasselli2013}]
\label{prop:follower_complete}
The solution to \ref{eq:follower} is given by
\begin{equation}
\label{eq:follower_sol}
F(y) = \left\{
	\begin{array}{ll}
	\frac{K}{\beta - 1}\left(\frac{y}{Y_F}\right)^{\beta} 
	& \text{ if } y \le Y_F, \\
	\frac{D_2 y}{\delta}-K 
	& \text{ if } y > Y_F, \\	
	\end{array}
	\right.
\end{equation}
with a threshold $Y_F$ given by
\begin{equation}
\label{eq:Y_F}
Y_F := \frac{\delta K \beta}{D_2 (\beta-1)}
\end{equation}
and
\begin{equation}
\label{eq:beta}
\beta := \left(\demi - \frac{r-\delta}{\eta^2}\right)
+ \sqrt{\left(\demi - \frac{r-\delta}{\eta^2}\right)^2 + \frac{2r}{\eta^2}}>1 .
\end{equation}
\end{proposition}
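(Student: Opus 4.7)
The standard free-boundary approach for perpetual American options applies. Since $(Y_t)_{t\ge 0}$ is time-homogeneous Markov under $\Q$ and the horizon is infinite, $F$ depends only on $y$. I would conjecture a threshold optimal stopping rule $\tau^{\star} = \inf\{s\ge t : Y_s \ge Y_F\}$ for some unknown $Y_F > 0$, and look for $F$ in the form $F(y) = A y^{\beta}$ on the continuation region $\{y < Y_F\}$ and $F(y) = D_2 y/\delta - K$ on the stopping region $\{y \ge Y_F\}$. The generator of $Y$ under $\Q$, read off \ref{eq:risk neutral Y}, has drift $(r-\delta)y$ and diffusion coefficient $\eta y$; consequently, $e^{-rs}\phi(Y_s)$ is a local martingale on the continuation region if and only if $\phi$ solves the Cauchy-Euler ODE $\demi \eta^2 y^2 \phi''(y) + (r-\delta) y \phi'(y) - r \phi(y) = 0$.

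The associated characteristic equation $\demi\eta^2 \beta(\beta-1) + (r-\delta)\beta - r = 0$ admits two real roots, the larger being the $\beta > 1$ of \ref{eq:beta} and the other being negative. The natural growth condition $\phi(y)/y$ bounded as $y \downarrow 0$, forced by the elementary bound $0 \le F(y) \le \E^{\Q}[\sup_{s\ge 0}e^{-rs}D_2 Y_s/\delta]$ together with $(D_2 y/\delta - K)^+ \to 0$, eliminates the negative root and leaves $\phi(y)=Ay^{\beta}$. The two remaining unknowns $A$ and $Y_F$ are then pinned down by the classical pair of free-boundary conditions at $Y_F$: value matching $A Y_F^{\beta} = D_2 Y_F/\delta - K$ and smooth pasting $\beta A Y_F^{\beta-1} = D_2/\delta$. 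Dividing the first equation by the second yields the announced threshold $Y_F = \delta K \beta / (D_2(\beta-1))$, and back-substitution gives $A Y_F^{\beta} = K/(\beta-1)$, which reproduces \ref{eq:follower_sol}.

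The main obstacle is the verification step, which shows that this candidate actually attains the supremum in \ref{eq:follower}. The plan is to apply Ito's formula (or Ito-Tanaka, since by smooth pasting the candidate $F$ lies in $C^1(\R_+)$ with locally bounded second derivative off $Y_F$) to $e^{-rs} F(Y_s)$ and to verify two pointwise properties on $\R_+$: the variational inequality $\mathcal{L} F - r F \le 0$ (with equality on $\{y<Y_F\}$ by construction, and on $\{y>Y_F\}$ reducing to $rK - D_2 y \le 0$, which holds because $Y_F \ge rK/D_2$, itself a consequence of the characteristic equation and $\beta>1$), and the dominance $F(y) \ge (D_2 y/\delta - K)^+$. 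Combined with a standard localisation that uses the polynomial growth of $F$ and the finite $\Q$-moments of $Y$, and Fatou's lemma on $\{\tau<\infty\}$, this yields $F(y) \ge \E^{\Q}[e^{-r(\tau-t)}(D_2 Y_\tau/\delta - K)^+ \ind{\tau<\infty} | \Fc_t]$ for every $\tau \in \Tc_t$. Plugging in $\tau^{\star}$ gives equality since the ODE holds below $Y_F$ and $F(Y_{\tau^{\star}}) = D_2 Y_F/\delta - K$ on $\{\tau^{\star}<\infty\}$. The standing assumption $\delta>0$ is used in two places: it guarantees $\beta>1$ (hence $Y_F<\infty$) and the finiteness of the cash-flow expectation in \ref{eq:VF}, thereby ensuring the candidate $F$ is itself finite.
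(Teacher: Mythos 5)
Your proof is correct and is essentially the argument the paper relies on: the paper offers no proof of its own, deferring to Grasselli \textit{et al.}\ \cite{grasselli2013}, whose Proposition 1 is established by exactly this free-boundary construction (Cauchy--Euler ODE for the candidate, value matching and smooth pasting to pin down $A$ and $Y_F$) followed by a supermartingale verification. All the computations check out, including the key observation that $Y_F \ge rK/D_2$ (hence the variational inequality on the stopping region) follows from the characteristic equation together with $\beta>1$ and $\delta>0$.
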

The behavior of
the follower is thus quite explicit.
He will defer investment
until the demand reaches at least
the level $Y_F=\beta/(\beta-1)K>K$
which depends on the 
profitability of
the investment opportunity,
the latter being conditioned
by $\delta>0$.
We thus introduce
\begin{equation}
\label{eq:tau follower}
\tau_F := \tau(Y_F) = \inf\{t\ge 0~:~Y_t \ge Y_F \} \; .
\end{equation}

\subsection{The leader's problem}
\label{sec:leader}
Assume now that instead of
having $Q(t)=1$ we have $Q(t)=0$.
Agent one investing at 
time $t$ will receive a stream of cash-flows
associated to the level $D_1$
for some time,
but he expects agent two
to enter the market
when the threshold $Y_F$
is triggered.
After the moment $\tau_F$,
both agents share the market
and agent one receives cash-flows
determined by level $D_2$.
The project value is thus

\begin{align*}
V^L (t,y)&:=\E^{\Q}\left[
	\int_t^{\infty}\hspace{-0.3cm} e^{-r (s-t)}(D_1\ind{s< \tau_F}+ D_2 \ind{s\ge \tau_F}) Y_s^{t,y}ds
\right]\\
 &= \frac{D_1 y}{\delta} - \frac{(D_1-D_2)Y_F}{\delta}\left(\frac{y}{Y_F}\right)^{\beta}
\end{align*}
where detailed computation can be found
in  Grasselli \textit{et al.} \cite{grasselli2013}.
This allows to characterize
the leader's value function $L(t,y)$, i.e.,
the option to invest at time $t$ for
a demand $y$,
as well as
the value of the project $S(t,y)$
in the situation of 
simultaneous investment.

\begin{proposition}[Prop. 2, \cite{grasselli2013}]
\label{prop:leader_complete}
The value function of a leader is given by 
\begin{equation}
\label{eq:leader_sol}
L(y) = \left\{
	\begin{array}{ll}
	\frac{D_1 y}{\delta} - \frac{(D_1-D_2)}{D_2} \frac{K\beta}{\beta-1}\left(\frac{y}{Y_F}\right)^{\beta}
	& \text{ if } y < Y_F,\\
	\frac{D_2 y}{\delta}-K 
	& \text{ if } y \ge Y_F ,\\	
	\end{array}
	\right.
\end{equation}
If both agents invest simultaneously,
we have
\begin{equation}
\label{eq:simultaneous_sol}
S(y):= \frac{D_2 y}{\delta}-K \;.
\end{equation}
\end{proposition}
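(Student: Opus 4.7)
I start from the expression for $V^{L}(t,y)$ displayed just above the statement; the task is to turn this into the claimed closed form for $L(y)$ and to handle the degenerate regime $y\ge Y_F$, together with the trivial simultaneous value $S(y)$. The governing principle is that, having paid the sunk cost $K$, a leader who invests at time $t$ with $Y_{t}=y$ collects $D_{1}Y_{s}$ until the follower enters at $\tau_{F}$ (an $\F$-stopping time determined by the follower's optimal rule from Proposition~\ref{prop:follower_complete}), and $D_{2}Y_{s}$ thereafter; the leader's payoff, discounted under $\Q$, is $L(y)=V^{L}(t,y)-K$ when $y<Y_{F}$, and it collapses to the simultaneous-investment value $S(y)$ when $y\ge Y_{F}$.

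The first step is to decompose the leader's cash-flow stream as $D_{1}Y_{s}$ on $[t,\infty)$ minus $(D_{1}-D_{2})Y_{s}$ on $[\tau_{F},\infty)$. Under $\Q$, equation~\ref{eq:risk neutral Y} gives $dY_{s}=Y_{s}((r-\delta)ds+\eta dW_{s}^{\Q})$, so the Gordon-type identity $\E^{\Q}[\int_{t}^{\infty}e^{-r(s-t)}Y_{s}\,ds\mid Y_{t}=y]=y/\delta$ is immediate once $\delta>0$. For the second integral, the strong Markov property applied at $\tau_{F}$, together with $Y_{\tau_{F}}=Y_{F}$ on $\{\tau_{F}<\infty\}$, reduces the computation to
\begin{equation*}
\E^{\Q}\!\left[\int_{\tau_{F}}^{\infty}e^{-r(s-t)}Y_{s}\,ds\,\Big|\,Y_{t}=y\right]
=\E^{\Q}\!\left[e^{-r(\tau_{F}-t)}\ind{\tau_{F}<\infty}\mid Y_{t}=y\right]\cdot\frac{Y_{F}}{\delta}.
\end{equation*}
The Laplace transform on the right is the classical first-passage formula for geometric Brownian motion: verifying that $y\mapsto(y/Y_{F})^{\beta}$ lies in the kernel of $\mathcal{L}-r$ for the $\Q$-generator $\mathcal{L}=\tfrac12\eta^{2}y^{2}\partial_{yy}+(r-\delta)y\partial_{y}$ recovers precisely the characteristic equation whose positive root is the $\beta$ of \ref{eq:beta}; the required boundary conditions $\phi(Y_{F})=1$ and $\phi(0^{+})=0$ single out $\phi(y)=(y/Y_{F})^{\beta}$, whence Dynkin's formula (with a standard localisation to control the local martingale on $[t,\tau_{F}\wedge n]$ and monotone convergence as $n\to\infty$) yields $\E^{\Q}[e^{-r(\tau_{F}-t)}\ind{\tau_{F}<\infty}\mid Y_{t}=y]=(y/Y_{F})^{\beta}$ for $y<Y_{F}$.

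Combining the two pieces gives the displayed identity $V^{L}(t,y)=D_{1}y/\delta-(D_{1}-D_{2})(Y_{F}/\delta)(y/Y_{F})^{\beta}$, and substituting the value of $Y_{F}$ from \ref{eq:Y_F} turns the prefactor $(D_{1}-D_{2})Y_{F}/\delta$ into $\tfrac{D_{1}-D_{2}}{D_{2}}\cdot\tfrac{K\beta}{\beta-1}$, yielding the first branch of \ref{eq:leader_sol}. The regime $y\ge Y_{F}$ is handled directly: since the follower's rule (Proposition~\ref{prop:follower_complete}) says he invests at once as soon as $Y$ crosses $Y_{F}$, leadership and followership are triggered simultaneously, so the leader receives the duopoly flow $D_{2}Y_{s}$ from the outset; the Gordon identity alone then gives $D_{2}y/\delta-K$, which is also exactly the formula for $S(y)$ obtained when both players invest at the same instant.

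The only genuinely delicate point is the Laplace-transform step, and even there the difficulty is only to make sure it survives the possibility $\Q[\tau_{F}=\infty]>0$ (which can occur when $r<\delta$ forces a negative drift under $\Q$). This is precisely why $\beta>1$ is the \emph{positive} root: the candidate $\phi(y)=(y/Y_{F})^{\beta}$ is bounded on $[0,Y_{F}]$, Dynkin's formula identifies it with $\E^{\Q}[e^{-r(\tau_{F}-t)}\ind{\tau_{F}<\infty}]$ rather than with $\E^{\Q}[e^{-r(\tau_{F}-t)}]$, and the indicator $\ind{\tau_{F}<\infty}$ is harmless in our decomposition because on $\{\tau_{F}=\infty\}$ the tail integral $\int_{\tau_{F}}^{\infty}$ is zero by convention. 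Everything else is algebraic simplification using \ref{eq:Y_F}.
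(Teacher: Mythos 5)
Your derivation of $V^{L}$ is correct and is precisely the computation that the paper itself does not spell out but delegates to Grasselli \emph{et al.}: the paper's ``proof'' of this proposition consists of the displayed formula for $V^{L}(t,y)$ with a pointer to the reference, so your decomposition of the cash-flow stream, the Gordon identity $\E^{\Q}[\int_t^{\infty}e^{-r(s-t)}Y_s\,ds]=y/\delta$, and the strong-Markov/first-passage argument giving $\E^{\Q}[e^{-r(\tau_F-t)}\ind{\tau_F<\infty}]=(y/Y_F)^{\beta}$ are exactly the missing details, and your care with the positive root $\beta$ and the event $\{\tau_F=\infty\}$ is sound.

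There is, however, one genuine inconsistency in your write-up concerning the sunk cost. Your proof plan (correctly, on economic grounds) asserts $L(y)=V^{L}(t,y)-K$ for $y<Y_F$, yet your concluding paragraph identifies the first branch of \ref{eq:leader_sol} with $V^{L}(t,y)$ itself, silently dropping the $-K$. These cannot both hold. As printed, the first branch of \ref{eq:leader_sol} carries no $-K$, so it equals $V^{L}$ and not $V^{L}-K$; note that this makes $L$ discontinuous at $Y_F$ (the left limit is $D_2Y_F/\delta$ while the second branch gives $D_2Y_F/\delta-K$), which is at odds with the source reference, where the leader's value is $V^{L}-K$ and $L$ is continuous with $L(Y_F)=F(Y_F)$. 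You should either carry the $-K$ through to the end and flag the printed statement as missing it, or drop the claim $L=V^{L}-K$ from your plan; as it stands the final step does not actually produce the formula your own setup demands. Everything else, including the identification of $S(y)$ in \ref{eq:simultaneous_sol} and the collapse of the $y\ge Y_F$ regime to the simultaneous value, is fine.
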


\begin{remark}
Notice that no
exercise time is involved
as we consider the interest
of exercising immediately,
$Y$ being non-negative.
Notice also that $L$, $F$ and $S$
do not depend on $t$,
since
the problem is stationary.
\end{remark}

The payoff of the investment opportunity is then fully
characterized for any situation in the case
of no regulatory intervention.

\subsection{The regulator}
\label{sec:regulator}

Let us define $\tau_1$
and $\tau_2$
the time at which
agents one and two respectively
express their desire to invest.
In full generality,
$\tau_i$ for $i=1,2$
can depend on a probability of acting
in a game, see the next subsection.
We assume that agents
cannot predict
the decision of
the regulator, 
so that
$\tau_1, \tau_2$ are
$\F$-adapted stopping times.
The regulator only intervenes
at such times.
If at time $\tau_i$ for $i=1,2$,
$Q(\tau_i^-)=0$,
but $Q(\tau_i)=1$
then his decision affects
only agent $i$ who expresses
his desire to be a leader.
If however $Q(\tau_i)=2$,
then $\tau_1=\tau_2$
and the regulator
shall decide if
one or the other agent
is accepted, none is or
both are.
Finally,
if $Q(\tau_i^-) = 1$,
then the regulator
takes his decision
upon the follower's fate.
The regulator decision
thus depends on $\F$.

We introduce a
probability space
$(\Lambda, \Pc(\Lambda),\A)$
where $\Lambda=\{\alpha_0, \alpha_1, \alpha_2, \alpha_S\}$.
We then introduce
the product space
$(\Omega \x \Lambda, \Fc \x \Pc(\Lambda), \P^+)$
and the augmented
filtration
$\F^+:=(\Fc^+_t)_{t\ge 0}$
with $\Fc^+_t:=\sigma \{\Fc_t, \Pc(\Lambda) \}$.
The regulator
is modeled by a
$\F^+$-adapted process.

\begin{definition}
\label{def:regulator}
Fix $t$ and $Y_t=y$. For $i=1,2$, if $j=3-i$ is the index of the opponent
and $\tau_j$ his time of investment, then agent $i$ desiring to invest
at time $t$ receives
\begin{equation}
\label{eq:regulator}
R_i(t,y):=\left\{
\begin{array}{ll}
0 & \text{if }\alpha=\alpha_0\\
L(y)\ind{t\le \tau_j}+F(y)\ind{t>\tau_j} & \text{if }\alpha=\alpha_i\\
F(y)\ind{t=\tau_j} & \text{if }\alpha=\alpha_j\\
L(y)\ind{t< \tau_j}+S(y)\ind{t=\tau_j}+F(y)\ind{t>\tau_j} & \text{if }\alpha=\alpha_S\\
\end{array}
\right. \; .
\end{equation}
\end{definition}

Let us discuss briefly this representation.
According to \ref{eq:regulator},
agent $i$ is accepted if alternative $\alpha_i$ or $\alpha_S$
is picked up, 
and denied if $\alpha_0$ or $\alpha_j$ is picked up.
It is therefore implicit in the model
that time does not
affect the regulator's decision upon acceptability.
However $Q(t)$ affects the position of leader or follower.
Probability $\P^+$ can thus
be given by $\P\x\A$,
and probability
$\A$ given by a quartet
$ \{q_0,q_1, q_2, q_S\}$.
However,
as we will see shortly,
the alternative $\alpha_0$
is irrelevant in what follows due to the way
regulatory intevention is modelled.
We assume $q_0<1$.
Since for the 
unregulated model we assumed that
agents are symmetrical,
the general study of the regulator's
parameters given by $\A$
will be chosen without loss of generality
such that $q_1\ge q_2$.

An additional major change comes into the evaluation of payoffs.
Agents information is reduced to $\F$.
Therefore the final settlement is
not evaluable as in the complete market setting,
and we shall introduce a pricing criterion.
We follow Smets \cite{smets1993}, Grenadier \cite{grenadier1996}
and many others by making the usual assumption
that agents are risk-neutral, i.e., 
they evaluate the payoffs
by taking the expectation of
previously computed values under $\A$.
The incomplete market 
is also commonly handled via utility maximization, 
see Benssoussan \textit{et al.} \cite{bensoussan2010} 
and Grasselli \text{et al.} \cite{grasselli2013}.
We postpone this approach to Section \ref{sec:aversion}.

\begin{remark}
\label{rem:incomplete}
 Expectations of \ref{eq:regulator}
 are made under the
 \emph{minimal entropy martingale measure} 
 $\Q\x \A$
 for this problem,
 meaning that uncertainty of the model
 follows a \emph{semi-complete} market hypothesis:
 if we reduce uncertainty
 to the market information $\F$,
 then the market is complete.
 See Becherer \cite{becherer2003} for details.
 Recalling Remark \ref{rem:units},
 we can thus highligh that 
 payoffs $L(y)$, $F(y)$ or $S(y)$ can
 be perfectectly evaluated.
 Therefore definitions
 of $Y$ and $D$ only matter in the interpretation
 of the model and the present choice of
 probability is not affected by such interpretation.
\end{remark}

Once the outcome of $\Lambda$ is settled, 
the payoff then depends on $Q(t)$, i.e.,
$\tau_j$.
We thus focus now on 
strategic interactions that determine $Q(t)$.

\subsection{Timing strategies}
\label{sec:strategies}
It has been observed
since Fudenberg \& Tirole \cite{fudenberg1985}
that real time $t\ge0$ is not sufficient to
describe strategic possibilities of opponents
in a coordination game.
If $Q(t)=0$ and agents
coordinate by deciding to invest with probabilities $p_i\in(0,1)$
for $i=1,2$,
then a one-round game at time $t$ implies
a probability $(1-p_1)(1-p_2)$ to exit without at least one investor.
However, another coordination
situation appears for the instant just after
and the game shall be repeated with the same parameters.
The problem has been settled
by Fudenberg and Tirole \cite{fudenberg1985}
in the deterministic case,
and recently by 
Thijssen \textit{et al.} \cite{thijssen2002}
for the stochastic setting.
We extend it to the model with regulator 
in the following manner.

It consists
in extending time $t\in \R_+$ to
$(t,k,l)\in \R_+ \x \N^*\x \N^*$.
The filtration $\F$
is augmented via
$\Fc_{t,i,j}= \Fc_{t,k,l}\subseteq \Fc_{t',i,j}$
for any $t<t'$ and any $(i,j)\ne (k,l)$,
and the state process is extended to
$Y_{(t,k,l)}:=Y_t$.
Therefore,
when both agents
desire to invest at the same time $t$,
we extend the time line by freezing 
real time $t$ and
indefinitely repeating
a game on natural time $k$.
Once the issue of the game is settled,
the regulator intervenes on natural time $l=1$.
If no participant is accepted, i.e.,
if $\alpha_0$ is picked up.
the game is replayed for $l=2$,
and so on.

This model needs to be commented.
At first glance,
a natural assumption is to make the
regulator intervene once and for all
for each agent or both at the same time.
If the regulator refuses the entry of
a competitor, one could
assume for example that the game is over
or that a delay is imposed before any other
action of the rejected agent.
The time extention above and the following 
Definition \ref{def:strategy} do 
not obey this rule.
However, it can be related
to a unique intervention of the regulator
by forthcoming Corollary \ref{cor:d and e}
and Proposition \ref{prop:table}, 
see Remark \ref{rem:reduction} at the end of
Section \ref{sec:coordination}.
The motivation for the present model
lies in presentation details.
In Definition \ref{def:regulator},
it allows the generic expression \ref{eq:regulator}
and avoids direct conditioning of $R_i(t,y)$
on the position of agent $i$.
Expression \ref{eq:regulator}
can thus be directly handled and modified to 
represent a more realistic situation
than one implying the systematic
entry of at least one player, 
as it will be shown in the next section.
The present model also allows
for the relevant specific cases of Section \ref{sec:singular_cournot}
and Section \ref{sec:stackelberg leadership}.
By using the framework of 
Fudenberg and Tirole \cite{fudenberg1985},
the present model also shows its limitations
as an idealized situation.

\begin{definition}
\label{def:strategy}
A strategy
for agent $i\in \{1,2\}$
is defined
as a pair of 
$\F$-adapted processes
$(G^i_{(t,k,l)}, p^i_{(t,k,l)})$
taking values in $[0,1]^2$ such that
\begin{enumerate}
	\item[(i)] The process $G^i_{(t,k,l)}$  is of the type
	$G^i_{(t,k,l)}(Y_t) = G^i_{t}(Y_t) 
	= \ind{t\ge \tau(\hat y)}$
	with
	$\tau(\hat y):=\inf\{t\ge 0~:~ Y_t\ge \hat y\}$.
	\item[(ii)] The process $p_i(t,k,l)$ is of the type
	$p_i(t,k,l)=p_i(t)= p^i(Y_t)$.
\end{enumerate}
\end{definition}

The reduced set of strategies is motivated
by several facts.
Since the process $Y$ is Markov,
we can focus
without loss of generality
on Markov sub-game perfect equilibrium strategies.
The process $G^i_{(t,k)}$
is a non-decreasing
\cadlag process,
and refers to the cumulative
probability of agent $i$
exercising before $t$.
Its use is kept when agent $i$
does not exercise immediately the option to invest,
and when exercise depends on
a specific stopping time of the form
$\tau(\hat Y)$,
such as 
the follower strategy.
The process $p^i_{(t,k,l)}$ denotes
the probability of exercising
in a coordinate game at round $k$,
after $l-1$ denials of the regulator, 
when $\alpha=\alpha_0$.
It should be
stationary
and not depend on the
previous rounds of the game since
no additional information is given.
For both processes,
the information is given by $\F$ 
and thus reduces to $Y_t$ at time $t$.
Additional details can be found 
in Thijssen \textit{et al.}
\cite{thijssen2002}.

\section{Optimal behavior and Nash equilibria}
\label{sec:nash}

\subsection{Conditions for a coordination game}
\label{sec:coordination}
A first statement about payoffs can be immediately
provided. A formal proof can be found in 
Grasselli \textit{et al.}~\cite{grasselli2013},
following arguments initially developped in Grenadier \cite{grenadier1996, grenadier2000}.

\begin{proposition}[Prop. 1, \cite{grenadier2000}]
\label{prop:thresholds}
There exists a unique point $Y_L\in (0,Y_F)$
such that
\begin{equation}
\label{eq:threshold_discr}
\left\{
	\begin{array}{ll}
	S(y)<L(y)<F(y) & \text{ for } y<Y_L,\\
	S(y)<L(y)=F(y) & \text{ for } y=Y_L,\\
	S(y)<F(y)<L(y) & \text{ for } Y_L<y<Y_F,\\
	S(y)=F(y)=L(y) & \text{ for } y\ge Y_F\; .
	\end{array}
\right.
\end{equation}
\end{proposition}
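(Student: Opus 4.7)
The plan is to split the domain of $y$ into the four regions appearing in \ref{eq:threshold_discr} and verify the claimed ordering in each, using the explicit expressions of Propositions \ref{prop:follower_complete} and \ref{prop:leader_complete}. Three of the four lines reduce to direct inspection of the formulas, so only the existence and uniqueness of the interior crossing $Y_L$ requires a genuine argument.

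First I would dispose of the region $y\ge Y_F$: there both $F$ and the second branch of $L$ coincide with $D_2 y/\delta - K = S(y)$, so $S=L=F$ without further work. On $(0,Y_F)$, the strict inequality $S(y)<F(y)$ is immediate from the optimal-stopping interpretation of $F$ in \ref{eq:follower}, since inside the continuation region $\{y<Y_F\}$ the value of waiting strictly dominates the immediate-exercise payoff $S(y)$. For $S(y)<L(y)$, substituting the identity $K\beta/(\beta-1) = D_2 Y_F/\delta$ (which is just \ref{eq:Y_F}) into the first branch of $L$ yields
\[
L(y)-S(y) \;=\; \frac{(D_1-D_2)Y_F}{\delta}\left[\frac{y}{Y_F}-\left(\frac{y}{Y_F}\right)^{\beta}\right],
\]
which is strictly positive because $D_1>D_2$, $\beta>1$ and $y/Y_F\in(0,1)$.

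The remaining and main step is the analysis of $h(y):=L(y)-F(y)$ on $[0,Y_F]$. I would first record the boundary values $h(0)=-K<0$ and $h(Y_F)=0$ (both immediate from the closed forms together with \ref{eq:Y_F}), then note that on this interval $h$ has the shape $h(y)=Ay - B(y/Y_F)^{\beta} - K$ with constants $A,B>0$; since $\beta>1$ this makes $h$ strictly concave on $[0,Y_F]$. A short computation gives $h'(0)=D_1/\delta>0$ and, after simplification using \ref{eq:Y_F},
\[
h'(Y_F) \;=\; \frac{(D_2-D_1)(\beta-1)}{\delta} \;<\; 0.
\]
By concavity $h$ is therefore first strictly increasing and then strictly decreasing on $[0,Y_F]$, and since $h(0)<0=h(Y_F)$ it must cross zero exactly once at some $Y_L\in(0,Y_F)$, with $h<0$ on $(0,Y_L)$, $h(Y_L)=0$ and $h>0$ on $(Y_L,Y_F)$. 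Combined with the preceding paragraph, this delivers the first three lines of \ref{eq:threshold_discr}. The only real obstacle is this concavity-plus-endpoint-derivatives step; everything else is a direct substitution into the closed-form expressions of Propositions \ref{prop:follower_complete} and \ref{prop:leader_complete}.
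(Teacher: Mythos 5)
Your proof is correct and complete, and since the paper itself offers no proof of this proposition (it defers to Grasselli \emph{et al.}~\cite{grasselli2013}), your argument is essentially the standard one from that reference: handle $y\ge Y_F$ and the comparisons with $S$ by direct substitution, then locate the unique crossing of $L$ and $F$ on $(0,Y_F)$ by showing that $h=L-F$ is strictly concave with $h(0)<0$, $h(Y_F)=0$ and $h'(Y_F)=(D_2-D_1)(\beta-1)/\delta<0$; all of these computations check out. One point deserves to be made explicit: your identities $L(y)-S(y)=\frac{(D_1-D_2)Y_F}{\delta}\bigl[\frac{y}{Y_F}-(\frac{y}{Y_F})^{\beta}\bigr]$ and $h(0)=-K$ both presuppose that the first branch of \ref{eq:leader_sol} carries the sunk cost, i.e.\ $L(y)=\frac{D_1y}{\delta}-\frac{(D_1-D_2)}{D_2}\frac{K\beta}{\beta-1}(\frac{y}{Y_F})^{\beta}-K$ for $y<Y_F$. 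That is the correct formula---it is the only one making $L$ continuous at $Y_F$ and the first line of \ref{eq:threshold_discr} true near $y=0$---but \ref{eq:leader_sol} as printed omits the $-K$, and with the printed version one would instead get $L(y)>F(y)$ for small $y$. You have silently corrected a typo in the source; say so, otherwise a reader checking your displayed identities against \ref{eq:leader_sol} will find them off by $K$.
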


Fix $t$ and $Y_t=y$.
In the deregulated situation,
three different cases are thus possible,
depending on the three intervals given by $0<Y_L<Y_F<+\infty$.
It appears that in our framework,
the discrimination also applies.

Consider the following situation.
Assume $Q(t)=0$ and $t=\tau_1<\tau_2$:
agent one
wants to start investing in the project as a leader
and agent two allows it, i.e., $(p_1(t), p_2(t))=(1,0)$.
By looking at \ref{eq:regulator},
agent one receives $L(y)$ with probability
$q_1+q_S$ and $0$ with probability $q_2+q_0$.
However, as noticed for
the coordination game,
if agent one is denied investment at $(t,1,1)$,
he can try at $(t,1,2)$ and so on
until he obtains $L(y)$.
Consequently, if $q_0<1$
agents can do \emph{as if} $\alpha_0$ is never picked up,
see Remark \ref{rem:reduction} below.
The setting is limited
by the continuous time approach
and Proposition \ref{prop:thresholds}
applies as well as in the standard case.
The situation is identical if $\tau_2<\tau_1$.

\begin{corollary}
\label{cor:d and e}
Let $t\ge 0$ and $y>0$. Then for $i=1,2$
\begin{enumerate}
	\item[(d)] if $y<Y_L$, agent $i$ defers action until $\tau(Y_L)$;
	\item[(e)] if $y>Y_F$, agent $i$ exercises immediately, i.e., $\tau_i=t$.
\end{enumerate}
\end{corollary}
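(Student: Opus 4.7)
The plan is to combine Proposition \ref{prop:thresholds} with the $l$-time bypass of $\alpha_0$ noted just before the corollary: since $q_0<1$, any agent who insists on acting is eventually assigned one of the admissible alternatives $\alpha_1$, $\alpha_2$, $\alpha_S$. Once the bypass is granted, on the two extreme regimes $y<Y_L$ and $y>Y_F$ the payoff ranking of Proposition \ref{prop:thresholds} makes the optimal decision unambiguous and the classical real-options argument applies essentially verbatim.

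For (e), at $y>Y_F$ Proposition \ref{prop:thresholds} gives $L(y)=F(y)=S(y)=D_2y/\delta-K>0$. Hence every admissible alternative pays agent $i$ the same strictly positive quantity, equal to the follower value $F(y)$. Any deferred action $\tau>t$ is bounded above in risk-neutral expectation by $F(y)$, which by \ref{eq:follower} is the supremum over $\F$-stopping times of the discounted payoff; since this supremum is attained already at $t$, immediate exercise is optimal and $\tau_i=t$.

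For (d), at $y<Y_L$ Proposition \ref{prop:thresholds} gives $S(y)<L(y)<F(y)$. Acting at $t$ yields, after the $\alpha_0$-bypass, a convex combination of $L(y)$, $F(y)$, $S(y)$, and possibly $0$ (the latter appearing when the opponent defers and $\alpha_j$ is drawn), a quantity bounded above by $F(y)$. Conversely, deferring until $\tau(Y_L)$ yields, by the standard first-passage Laplace transform $\E^{\Q}[e^{-r\tau(Y_L)}\,|\,Y_t=y]=(y/Y_L)^\beta$ together with $L(Y_L)=F(Y_L)$, the discounted value $(y/Y_L)^\beta F(Y_L)$, which equals $F(y)$ by the explicit form of the follower value in Proposition \ref{prop:follower_complete}. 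Hence deferral weakly dominates immediate exercise, and by symmetry neither agent acts before $\tau(Y_L)$.

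The main obstacle is not the pointwise payoff comparison but the conceptual articulation that $\tau(Y_L)$ is the exact deferral target. The ``$\ge\tau(Y_L)$'' direction is what the present corollary asserts and is established as above; the matching ``$\le\tau(Y_L)$'' statement, relying on the reversed ordering $L\ge F$ on $[Y_L,Y_F]$, belongs to the coordination-game equilibrium analysis of Section \ref{sec:coordination} and is naturally postponed to the subsequent results rather than addressed here.
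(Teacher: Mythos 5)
Your proof is correct and follows essentially the same route as the paper: the ordering of Proposition \ref{prop:thresholds} combined with the $\alpha_0$-resubmission (geometric series) argument, giving strict preference for deferral when $L<F$ and indifference-hence-immediate-exercise when $L=F=S$. You supply extra detail the paper omits (the hitting-time identity $(y/Y_L)^\beta F(Y_L)=F(y)$ and the optimal-stopping upper bound for part (e)); the only caveat is that your claim that deferral yields \emph{exactly} $F(y)$ tacitly assumes the no-simultaneous-exercise outcome at $Y_L$ established only later in Lemma \ref{lem:YL}, whereas the paper's one-line comparison $(q_S+q_i)L(y)<(q_S+q_i)F(y)$ sidesteps the need to evaluate the deferral payoff at all.
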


\begin{proof}
(d) According to \ref{eq:threshold_discr},
if $\tau_1\ne \tau_2$,
expected payoffs given to
\ref{eq:regulator} verify
$(q_S+q_i)L(y)<(q_S+q_i)F(y)$ and
there is no incentive to act for agent $i$,
$i=1,2$.

(e) Since $S(y)=F(y)=L(y)$ and $\tau_F=t$, 
both agents act with probability
$(p_1(t), p_2(t))=(1,1)$.
Since $q_0<1$, they submit their request
as much as needed and
receive $S(y)$ with probability
$(1-q_0)\sum_{l\in \N}q_0^l = 1$. 
\end{proof}

Corollary \ref{cor:d and e} does not
describe which action is undertaken at $\tau(Y_L)$
or if $y=Y_F$.
We are thus left to study the case where $Y_L\le y \le Y_F$.
This is done by considering a coordination game.
Following the reasoning of the above proof,
definition \ref{eq:regulator} allows
to give the expected payoffs of the game.
Let us define
\begin{equation}
\label{eq:payoff_arbitrator}
(S_1, S_2):=\left( \frac{1}{1-q_0}(q_1 L+q_2 F + q_S S), \frac{1}{1-q_0}(q_2 L+q_1 F + q_S S )\right)\; .
\end{equation}
Notice that $S_i$ are bounded by $\max \{L, F, S\}$ since
$q_1+q_2+q_s = 1-q_0$.

\begin{proposition}
\label{prop:table}
For any $q_0<1$,
agents play the coordination game given by 
Table \ref{tab:1}
if $Y_L<y<Y_F$. Consequently, we can assume $q_0=0$
without loss of generality.

\begin{table}[h!]  
  \centering
  \begin{tabular}{|c|c|c|}
    \cline{2-3}
    \multicolumn{1}{c|}{} & Exercise & Defer \\ \hline
    Exercise    & $(S_1(y), S_2(y))$   & $(L(y), F(y))$  \\ \hline
    Defer   & $(F(y), L(y))$   & Repeat \\ \hline
  \end{tabular}
  \caption{Coordination game at $(t, k, l)$.}
  \label{tab:1}
\end{table}
\end{proposition}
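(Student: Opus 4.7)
The plan is to enumerate the four pure-strategy profiles of the round at virtual time $(t,k,1)$ under the hypotheses $Y_t = y \in (Y_L, Y_F)$ and $Q(t^-)=0$, then compute each agent's expected payoff by integrating out the regulator outcome $\alpha\in\Lambda$ and the ensuing replays on $l$. The resulting four-cell matrix should coincide with Table \ref{tab:1}, after which the second assertion of the proposition follows by inspection of the formulas for $S_1$ and $S_2$.

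First I would dispatch the three simpler cells. For (Defer, Defer), by Definition \ref{def:strategy} neither agent invokes the regulator, so the extended-time convention of Section \ref{sec:strategies} moves the game to $(t, k+1, 1)$, which is the Repeat entry. For (Exercise, Defer), one has $\tau_1 = t < \tau_2$; substituting in \ref{eq:regulator} gives agent $1$ the payoff $L(y)$ on $\{\alpha_1, \alpha_S\}$ and $0$ on $\{\alpha_0, \alpha_2\}$, the latter being the event ``no participant is accepted'' since agent $2$ has filed no request and $\alpha_2$ consequently grants nothing. The replay rule of Section \ref{sec:strategies} then pushes the configuration to $(t, k, l+1)$, and a geometric series in $l$ with common ratio $q_0 + q_2 < 1$ produces the ultimate payoff $L(y)$ with probability one. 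Agent $2$, now a solitary follower, collects his option value $F(y)$ at his optimal stopping time $\tau_F$ by Proposition \ref{prop:follower_complete}. A mirror argument supplies the (Defer, Exercise) entry.

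For the coordination cell (Exercise, Exercise), $\tau_1 = \tau_2 = t$, and Definition \ref{def:regulator} assigns agent $1$ the payoffs $L(y), F(y), S(y)$ under $\alpha_1, \alpha_2, \alpha_S$ and $0$ under $\alpha_0$, after which the game replays on $l$. Summing over $l$ yields
\begin{equation*}
\E^{\A}[R_1(t,y)] = \sum_{l \ge 0} q_0^{l}\bigl(q_1 L(y) + q_2 F(y) + q_S S(y)\bigr) = S_1(y),
\end{equation*}
and symmetrically $\E^{\A}[R_2(t,y)] = S_2(y)$, giving the $(S_1(y), S_2(y))$ entry. Finally, every cell of the matrix depends on $\A$ only through the conditional probabilities $q_i/(1-q_0)$ for $i \in \{1,2,S\}$, so substituting $(q_0, q_1, q_2, q_S)$ by $(0, q_1/(1-q_0), q_2/(1-q_0), q_S/(1-q_0))$ leaves the normal form invariant, preserves best responses and Nash equilibria, and justifies the reduction to $q_0 = 0$.

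The main obstacle is not computational but interpretative: one must justify that in the (Exercise, Defer) cell the outcome $\alpha = \alpha_j$ behaves as a non-acceptance triggering the $l$-replay rather than terminating the game with agent $i$ empty-handed. This reading is consistent both with the literal indicator structure of \ref{eq:regulator} and with the ``no participant accepted'' language of Section \ref{sec:strategies}, but it has to be invoked explicitly since the text formally ties the $l$-replay only to $\alpha_0$.
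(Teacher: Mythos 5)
Your proof is correct and follows essentially the same route as the paper's: the paper likewise treats the $\alpha_0$ outcome as a replay on the index $l$, sums the geometric series $(1-q_0)\sum_{l} q_0^l = 1$ to obtain the entries of Table \ref{tab:1} (the off-diagonal cells resting on the retry-until-accepted argument you describe, which the paper states just before Corollary \ref{cor:d and e}), and concludes with the same equivalence between the quartet $\{q_0,q_1,q_2,q_S\}$ and its normalization $\{0, q_1/(1-q_0), q_2/(1-q_0), q_S/(1-q_0)\}$. Your closing interpretive caveat --- that denial of a solitary applicant via $\alpha_j$ must also trigger the $l$-replay --- is precisely the reading the paper adopts, so it is a clarification rather than a gap.
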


\begin{proof}
If $\alpha =\alpha_0$ after settlement at round $k$ of the game, 
time goes from $(t,k,l)$ to $(t,1,l+1)$ and the game is repeated.
Therefore,
according to \ref{eq:regulator},
the game takes the form of Table \ref{tab:1} for a fixed $l$
and is settled with probability $1-q_0$,
or canceled and repeated with probability $q_0$.
If $(p_1, p_2)$ is a given strategy for the game
at $t$ and $(E_1(p_1,p_2), E_2(p_1, p_2))$ the 
consequent expected payoff for agents one and two
in the game of Table \ref{tab:1},
then the total expected payoff for
agent $i$ at time $t$ is
\begin{equation}
\label{eq:total expected payoff}
E_i(p_1,p_2)(1-q_0)\sum_{l\in \N}q_0^l = E_i(p_1,p_2)\; .
\end{equation}
The game is thus not affected by $q_0<1$, 
which can take any value stricly lower than one.
Without loss of generality, 
we can then take $q_0=0$ and reduce the game to one intervention of the regulator, 
i.e. $l\le 1$.

When $\tau_1\ne \tau_2$,
the probability that the regulator
accepts agent $i$ demand for investment
is $q_i+q_S$, see Corollary \ref{cor:d and e}.
There is a complete equivalence of payoffs and strategies
for quartet $\{q_0, q_1, q_2, q_S\}$ and
quartet $\{0, q_1/(1-q_0), q_2/(1-q_0), q_S/(1-q_0)\}$.
Probability $q_0$ can then be settled to zero.
\end{proof}

\begin{remark}
 \label{rem:reduction}
 Extended time model of Section \ref{sec:strategies} 
 implies strong limitations
 to the issues of the game:
 Corollary \ref{cor:d and e}
 and Proposition \ref{prop:table}
 impose the emergence of at least one player
 at $\tau_1$ and $\tau_2$.
 The confrontation has thus three outcomes
 and can bear the following interpretation.
 In the present model of competition,
 a regulator intervenes only if
 the two agents attemps to enter the 
 investment opportunity at the same time.
 He then decides among three alternatives:
 letting one, or the other, or both agents
 enter the opportunity.
 The decision of the regulator is definitive
 and it is thus possible
 to connect this procedure to 
 the original one of Grenadier \cite{grenadier1996}
 mentionned in the introduction, 
 see also Section \ref{sec:singular_cournot}.
\end{remark}

According to Remark \ref{rem:reduction}, we settle $q_0=0$ from now on.
We now turn to the solution to Table \ref{tab:1}.

\subsection{Solution in the regular case}

We reduce the analysis here
to the case where
$0<q_2\le q_1 < 1-q_2$.
We can now assume
that 
\begin{equation}
\label{eq:strategy mixte}
\max(p_1, p_2)>0 \; .
\end{equation}
We now introduce
$p_0(y):= (L(y)-F(y))/(L(y)-S(y))$ and two functions
\begin{equation}
\label{eq:V1}
P_i(y):= \frac{p_0(y)}{q_i p_0(y)+q_S}=\frac{L(y)-F(y)}{L(y)-S_j(y)}
, \quad i\ne j \in\{1,2\}^2 \; .
\end{equation}
The values of $P_i$
strongly discriminates the issue of the game.
Since $q_1\ge q_2$, if $Y_L\le y \le Y_F$,
then $S_1(y)\ge S_2(y)$ according to \ref{eq:threshold_discr}
on that interval and $P_2(y)\ge P_1(y)$.

\begin{lemma}
\label{prop:concave_V}
The functions $P_2$ and $P_1$ are increasing on $[Y_L, Y_F]$.
\end{lemma}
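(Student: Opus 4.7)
The plan is to first reduce the claim to the monotonicity of a single scalar function, and then prove that by explicit computation with the closed forms of Propositions \ref{prop:follower_complete} and \ref{prop:leader_complete}.

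The reduction step uses the first equality in \ref{eq:V1}: writing $P_i(y)=p_0(y)/(q_i p_0(y)+q_S)$, the map $\xi\mapsto \xi/(q_i\xi+q_S)$ has derivative $q_S/(q_i\xi+q_S)^2$, which is strictly positive because the regular-case hypothesis $q_1<1-q_2$ together with $q_0=0$ and $q_1+q_2+q_S=1$ forces $q_S>0$. Consequently, both $P_1$ and $P_2$ inherit strict monotonicity from that of the common function $p_0(y):=(L(y)-F(y))/(L(y)-S(y))$, so I need not treat the two indices separately.

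For the computation, I would set $u:=y/Y_F$ and introduce the scaled quantities $\eta_L:=D_1 Y_F/\delta$, $\eta_S:=D_2 Y_F/\delta$. The defining relation $K\beta/(\beta-1)=D_2 Y_F/\delta=\eta_S$ coming from \ref{eq:Y_F} (which also implies $\eta_L>\eta_S>K>0$) lets the closed forms of $L$, $F$ and $S$ on $[Y_L,Y_F]$ collapse into
\begin{align*}
L(y)-S(y) &= (\eta_L-\eta_S)(u-u^\beta), \\
L(y)-F(y) &= \eta_L(u-u^\beta)-K(1-u^\beta),
\end{align*}
so that $p_0(u)=\eta_L/(\eta_L-\eta_S)-[K/(\eta_L-\eta_S)]\,g(u)$ with $g(u):=(1-u^\beta)/(u-u^\beta)$. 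Since the prefactor $K/(\eta_L-\eta_S)$ is strictly positive, $p_0$ is increasing in $u$ if and only if $g$ is decreasing on $(Y_L/Y_F,1)$. Differentiating, the sign of $g'$ is that of $h(u):=(1-\beta)u^\beta+\beta u^{\beta-1}-1$; one checks $h(1)=0$ and $h'(u)=\beta(\beta-1)u^{\beta-2}(1-u)$, which is strictly positive on $(0,1)$ because $\beta>1$. Thus $h<0$ on $(0,1)$, $g'<0$ on $(0,1)$, and the monotonicity of $p_0$, hence of $P_1$ and $P_2$, follows.

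The main obstacle is to spot the right change of variables under which the differences collapse: once one sees that $L-S$ factors as $(\eta_L-\eta_S)(u-u^\beta)$ and that the numerator $L-F$ also admits the parallel factorization above, the rest reduces to the elementary one-variable inequality $h<0$ on $(0,1)$, which is essentially forced. The remaining manipulations are routine calculus.
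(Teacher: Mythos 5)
Your proof is correct, and it reaches the conclusion by a cleaner route than the paper's own argument. Both proofs reduce the claim to the sign of a single scalar expression built from $L-F$ and $L-S$, but you first pass $P_i$ through the increasing map $\xi\mapsto\xi/(q_i\xi+q_S)$ (legitimate: the regular-case hypotheses $q_0=0$, $q_1<1-q_2$ do force $q_S>0$), so that only the monotonicity of $p_0=(L-F)/(L-S)$ remains, and the substitution $u=y/Y_F$ then collapses everything to the elementary inequality $h(u)=(1-\beta)u^\beta+\beta u^{\beta-1}-1<0$ on $(0,1)$, which you establish rigorously from $h(1)=0$ and $h'(u)=\beta(\beta-1)u^{\beta-2}(1-u)>0$. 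The paper instead differentiates $P_i$ directly, studies $g=d_1d_2'-d_2d_1'$ with $d_1=L-F$, $d_2=S-F$, and closes by asserting that $(\delta K)/D_2\ge\beta+1/\beta-2-(\delta K)/D_2$ holds ``naturally for any $\beta$'' --- a step that is not obviously true as stated and is at best left unjustified; your endgame replaces it with an airtight one-variable argument, which is the main thing your approach buys. One caveat worth recording: your factorizations of $L-S$ and $L-F$ implicitly use the leader value $L(y)=D_1y/\delta-K-(D_1-D_2)\delta^{-1}Y_F(y/Y_F)^\beta$ on $[Y_L,Y_F)$, i.e.\ including the sunk cost $-K$. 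The display \ref{eq:leader_sol} omits that $-K$, but this is a typo in the paper (as printed, $L$ would be discontinuous at $Y_F$ and the last line of Proposition \ref{prop:thresholds} would fail), so you are computing with the intended formula; it would be worth stating this correction explicitly rather than leaving it implicit in the algebra.
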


\begin{proof}
By taking $d_1(y):=L(y)- F(y)$ and
$d_2(y):=S(y)-F(y)$,
we get

$$
P_i(y) = \frac{1}{q_i} \left[ \frac{d_1(y)}{d_1(y) + \gamma_i (d_1(y)-d_2(y))} \right]
$$
and

$$
P'_i(y) = \frac{1}{q_i} \left[ \frac{\gamma_i(d_1 (y) d'_2 (y) - d_2(y)d'_1(y))}{(d_1(y) + \gamma_i (d_1(y)-d_2(y)))^2} \right]
$$
where 
$\gamma_i :=q_S/ q_i\le 1 \text{ with } i\in \{1,2\}.$
We are thus interested in the sign of
the quantity $g(y):= d_1 (y) d'_2 (y) - d_2(y)d'_1(y)$
which quickly leads to

$$
\frac{g(y) \delta}{y D_2} = \left(\frac{y}{Y_F}\right)^{\beta-1} 
\left[ (D_1 - D_2)(\beta+\frac{1}{\beta}-2 - \frac{\delta K}{D_2}) \right]
+\frac{\delta K}{D_2}\left(D_1-D_2\right)\; .
$$
Since $\beta>1$,
$(y/Y_F)^{\beta -1}$ is increasing in $y$.
Since $0<y\le Y_F$,
it suffices to verify that
$(\delta K)/D_2\ge (\beta+1/\beta-2 - (\delta K)/D_2)$,
which is naturally the case for any $\beta$,
to obtain that $g$ is non-negative on the interval.
\end{proof}

We will omit for now the symmetric case
and assume $P_1(y)<P_2(y)$ for $Y_L<y<Y_F$.
Recall now that $q_i>0$ for $i=1,2$.
Then
$P_i(Y_F) = 1/(q_i+q_S)>1$ for $i=1,2$.
Accordingly and by Lemma \ref{prop:concave_V},
there exists $Y_L<Y_1<Y_2<Y_F$ such that
	\begin{equation}
	\label{eq:Y1}
	\begin{array}{l}
	F(Y_1) = q_1 L(Y_1) + q_2 F(Y_1) + q_S S(Y_1)=S_1(Y_1)\;,\\
	F(Y_2) = q_2 L(Y_2) + q_1 F(Y_2) + q_S S(Y_2)=S_2(Y_2) \; .
	\end{array}
	\end{equation}

\begin{proposition} Assume $Y_L<y < Y_F$. Then solutions of Table \ref{tab:1}
are of three types:
\begin{enumerate}
\item[(a)] If $Y_L<y<Y_1$
the game
has three Nash Equilibria given by 
two pure strategies $(1,0)$ and $(1,0)$,
and one mixed strategy $(P_1(y),P_2(y))$.
\item[(b)] If $Y_1 \le 1<Y_2$, the game
has one Nash Equilibrium given by strategies $(1,0)$.
\item[(c)] If $Y_2 \le y<Y_F$, the game
has one Nash Equilibrium given by strategies $(1,1)$.
\end{enumerate}
\end{proposition}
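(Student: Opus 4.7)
The plan is to determine, for each $y\in(Y_L,Y_F)$, the best responses in Table \ref{tab:1} and combine them. I will rely heavily on Lemma \ref{prop:concave_V} together with the definitions of $Y_1, Y_2$ in \ref{eq:Y1}: since the increasing functions satisfy $P_1\le P_2$ on $[Y_L,Y_F]$ with $P_2(Y_1)=P_1(Y_2)=1$, the equivalences $F(y)\ge S_1(y)\Leftrightarrow y\le Y_1$ and $F(y)\ge S_2(y)\Leftrightarrow y\le Y_2$ hold and will drive the whole analysis.

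I would first enumerate the pure strategy Nash equilibria. For $(1,0)$: Exercise is a best response for agent one iff $L\ge F$, which holds on $(Y_L,Y_F)$ by Proposition \ref{prop:thresholds}; Defer is a best response for agent two iff $F\ge S_2$, i.e. $y\le Y_2$. Symmetrically, $(0,1)$ is a Nash equilibrium iff $y\le Y_1$. For $(1,1)$: each agent $i$ requires $S_i\ge F$, so both $y\ge Y_1$ and $y\ge Y_2$, that is $y\ge Y_2$. The pair $(0,0)$ is not an equilibrium, since against a constant defer the best response is to exercise and pocket $L>F$.

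Next I would treat the mixed equilibrium. Because Defer--Defer triggers repetition at virtual time $(t,k+1,l)$ with an identical payoff matrix, the expected values $v_i(p_1,p_2)$ solve
\begin{equation*}
v_i = p_1 p_2 S_i + p_i(1-p_j) L + (1-p_i)p_j F + (1-p_1)(1-p_2)\,v_i ,
\end{equation*}
which is well defined when $p_1+p_2>0$; playing pure Defer against an opponent exercising with probability $p_j>0$ yields, by geometric summation over the rounds before settlement, the limit payoff $F$. The indifference condition for agent $j$ therefore reads $p_i S_j+(1-p_i) L = F$, giving $p_i = (L-F)/(L-S_j) = P_i(y)$. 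The only candidate mixed equilibrium is then $(P_1(y),P_2(y))$, and it is admissible (both coordinates in $(0,1)$) iff $P_2(y)<1$, which by monotonicity of $P_2$ is exactly $y<Y_1$.

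Assembling the three characterizations yields the claim: on $(Y_L,Y_1)$ the two pure equilibria $(1,0),(0,1)$ coexist with the mixed equilibrium $(P_1(y),P_2(y))$; on $[Y_1,Y_2)$ both $(0,1)$ and the mixed candidate fail because $S_1\ge F$ now forces agent one to Exercise, leaving only $(1,0)$; on $[Y_2,Y_F)$ the asymmetric options disappear and $(1,1)$ is the unique Nash equilibrium. The most delicate step is justifying the pure-defer payoff $F$ inside the indifference computation, which hinges on $q_0=0$ from Proposition \ref{prop:table} and on the strictly positive exercise probability of the opponent guaranteeing almost-sure termination of the coordination game.
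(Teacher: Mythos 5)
Your proof is correct and, at its core, takes the same route as the paper: both arguments rest on the geometric summation over rounds of the repeated coordination game (your recursive equation for $v_i$ is exactly the paper's computation of $a_1,a_2,a_S$ in \ref{eq:a1}--\ref{eq:aS}), the same quantities $P_i=(L-F)/(L-S_j)$, and the same thresholds $Y_1,Y_2$ defined by $P_2(Y_1)=P_1(Y_2)=1$ together with the monotonicity from Lemma \ref{prop:concave_V}. The packaging differs: you enumerate pure equilibria by best responses and pin down the mixed one by the indifference principle, whereas the paper writes out $E_1(p_1,p_2)$ explicitly and reads best responses off the sign of $\partial E_1/\partial p_1$, which equals the sign of $P_2-p_2$ (the paper's phrase ``sign of $(p_2-P_2)$'' is a slip, as its own case (iii) shows). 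Your version is cleaner, but it is slightly less exhaustive on one point: because $E_1(p_1,0)=L$ for every $p_1>0$, agent one is indifferent among all positive exercise probabilities against a deferring opponent, so the profiles $(p_1,0)$ with $p_1\in(P_1,1]$ (and symmetrically $(0,p_2)$ with $p_2\in(P_2,1]$) are also Nash equilibria; your pure-versus-fully-mixed enumeration skips these, while the paper's derivative analysis surfaces them and then collapses them by outcome equivalence with $(1,0)$ and $(0,1)$, which is what the proposition's ``three types'' really means. Your handling of $(0,0)$ is also informal (the defer--defer payoff is not defined by the geometric sums); the paper simply excludes it by the standing assumption \ref{eq:strategy mixte}. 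Neither point invalidates your argument, and both proofs share the same mild looseness at the boundary points $y=Y_1$ and $y=Y_2$, where the relevant preferences are only weak.
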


\begin{proof}
For $Y_L<y<Y_1$, we have $P_1<P_2<1$.
Fix an arbitrary constant strategy
$(p_1, p_2)\in[0,1]^2$.
Considering the $\A$-expected payoffs,
agent one receives $L(y)$ at the end of the game
with a probability
	\begin{equation}
	\label{eq:a1}
	a_1 := p_1 \sum_{k\in \N^*} (1-p_1)^{k-1}(1-p_2)^k
	 = \frac{p_1 (1-p_2)}{p_1 + p_2 -p_1 p_2} \; .
	\end{equation}
Symmetrically, he receives $F(y)$, and agent two
receives $L(y)$ with probability
\begin{equation}
\label{eq:a2}
	a_2 := \dfrac{p_2 (1-p_1)}{p_1 + p_2 -p_1 p_2} \;,
\end{equation}
and they receives $(S_1(y), S_2(y))$ with probability
\begin{equation}
\label{eq:aS}
	a_S := \dfrac{p_1 p_2}{p_1 + p_2 -p_1 p_2} \;.
\end{equation}
The expected payoff of the game for agent one is given by
\begin{equation}
\label{eq:expected utility}
\begin{array}{ll}
E_1(p_1, p_2) &:= a_1 L(y)+a_2 F(y)+ a_S S_1(y)\\
& \ = (a_1+ a_S q_1)L(y) + (a_2 + a_S q_2)F(y)+a_S q_S S(y)\; .
\end{array}
\end{equation}
A similar expression $E_2$ is given for agent two.
Now fix $p_2$. Since $E_1$ is a continuous function of
both variables,
maximum of \ref{eq:expected utility} depends on
\begin{equation}
\label{eq:partial_derivative}
\frac{\partial E_1}{\partial p_1}(p_1,p_2)
= \frac{p_2 (L(y)-F(y) + p_2^2(S_1(y) - L(y))}{(p_1(1-p_2) + p_2)^2} \;.
\end{equation}
One can then see that
the sign of \ref{eq:partial_derivative}
is the sign of $(p_2-P_2)$.
A similar discrimination
for agent two implies
$P_1$.

\textbf{(a)}
If $Y_L<y<Y_1$, then
according to \ref{eq:V1} and \ref{eq:Y1},
$P_i< 1$ for $i=1,2$.
Three situations emerge:
\begin{enumerate}
\item[(i)] If $p_2>P_2$,
the optimal $p_1$ is $0$.
Then by \ref{eq:partial_derivative},
$E_2$ should not depend on $p_2$,
and the situation is stable for
any pair $(0,p_2)$ with $p_2$ in $(P_2,1]$.
\item[(ii)] If $p_2=P_2$,
$E_1$ is constant and $p_1$ can take any value.
If $p_1<P_1$, then by symmetry
$p_2$ should take value $1$, 
leading to case (i).
If $p_1=P_1$,
$E_2$ is constant and
either $p_2=P_2$, 
or we fall in case (i) or (iii).
The only possible equilibrium
is thus $(P_1, P_2)$.
\item [(iii)] If $p_2<P_2$,
$E_1$ is increasing with $p_1$ and
agent one shall play with probability $p_1=1> P_1$.
Therefore $p_2$ optimizes $E_2$ when being $0$,
and $E_1$ becomes independent of $p_1$.
Altogether,
situation stays unchanged if $p_1\in (P_1, 1]$
or if $p_1=0$.
Otherwise, if $p_1\le P_1$, we fall back
into cases (i) or (ii).
The equilibria here are
$(p_1, 0)$ with $p_1\in (P_1, 1]$,
and the trivial case $(0,0)$.
\end{enumerate}
Recalling 
constraint \ref{eq:strategy mixte},
we get rid of case $(0,0)$.
Coming back to the issue of the game
when $k$ goes to infinity in $(t,k,1)$,
three situations emerge
from the above calculation.
Two of them are
pure coordinated equilibriums,
of the type
$(a_1, a_2)=(1,0)$ 
or $(0,1)$,
which can be produced
by pure coordinated strategies
$(p_1, p_2)=(a_1, a_2)$,
settling the game in only one round.
The third one is
a mixed equilibrium given by
$
(p_1, p_2):=(P_1, P_2)
$.

	\textbf{(b)}
	According to \ref{eq:Y1},
	$S_1(Y_1)=F(Y_1)$.
	Following Lemma \ref{prop:concave_V},
	agent one prefers being a leader for $y\ge Y_1$
	and prefers a regulator intervention rather than
	the follower position, i.e. $S_1(y)\ge F(y)$.
	Thus $p_1=1$.
	For agent two,
	defering means receiving $F(y)>F(Y_1)$
	and exercising implies a regulation intervention.
	Since $y<Y_2$, $q_1 F(y) + q_2 L(y) + q_S S(y)<F(y)$ and
	defering is his best option: $p_2=0$.
	That means that on $(Y_1,Y_2)$, 
	the equilibrium strategy is $(p_1, p_2)=(1,0)$.
	
\textbf{(c)}
	On the interval $[Y_2, Y_F)$,
	the reasoning of (b) still applies for agent one
	by monotony, and $p_1=1$.
	The second agent can finally
	bear the same uncertainty if $y\ge Y_2$,
	and $p_2=1$.
	Here, $1\le P_1(y) \le P_2(y)$
	and both agents have greater
	expected payoff by letting
	the regulator intervene rather than
	being follower.
	Equilibrium exists
	when both agents exercise.
\end{proof}

Two reasons force us
to prefer the strategy $(P_1, P_2)$
on $(1,0)$ and $(0,1)$ in case (a). 
First, it is the only one
which extends naturally the symmetric case.
Second, it is the only trembling-hand equilibrium.
Considering $(P_1,P_2)$ in the interval (a),
$(a_1, a_2, a_S)$ follows
according to \ref{eq:a1}, \ref{eq:a2} and \ref{eq:aS},
	\begin{equation}
	\label{eq:outcomes_complete}
	(a_1, a_2, a_S)=
	\left(\frac{1-p_0}{2-p_0},
	\frac{1-p_0}{2-p_0}, 
	\frac{p_0}{2-p_0}\right)\;,
	\end{equation}
If we plug \ref{eq:outcomes_complete}
into \ref{eq:expected utility},
we obtain that the
payoff of respective agents
do not depend on $(q_1, q_2, q_S)$:
\begin{equation}
\label{eq:expected payoffs}
E_1(P_1, P_2)=E_2(P_1,P_2) = \frac{1-p_0}{2-p_0}(L(y)+F(y))+\frac{a_0}{2-p_0}S(y)\;.
\end{equation}
In the case $q_S>0$,
they are equal to $F$.
As notice in Grasselli \textit{et al.}\cite{grasselli2013},
we retrieve a mathematical expression of 
the \emph{rent equalization principle} of 
Fudenberg and Tirole \cite{fudenberg1985}:
agents are indifferent
between playing the game
and being the follower,
and time value of leadership vanishes with preemption.
In addition,
asymmetry $q_1\ge q_2$ does not
affect the payoffs
and the final outcome
of the game after decision
of the regulator
has the same probability as in 
the deregulated situation, see
Grasselli \textit{et al.} \cite{grasselli2013}.

\begin{figure}[htp]
\centering
\includegraphics[width=12cm, height=8cm]{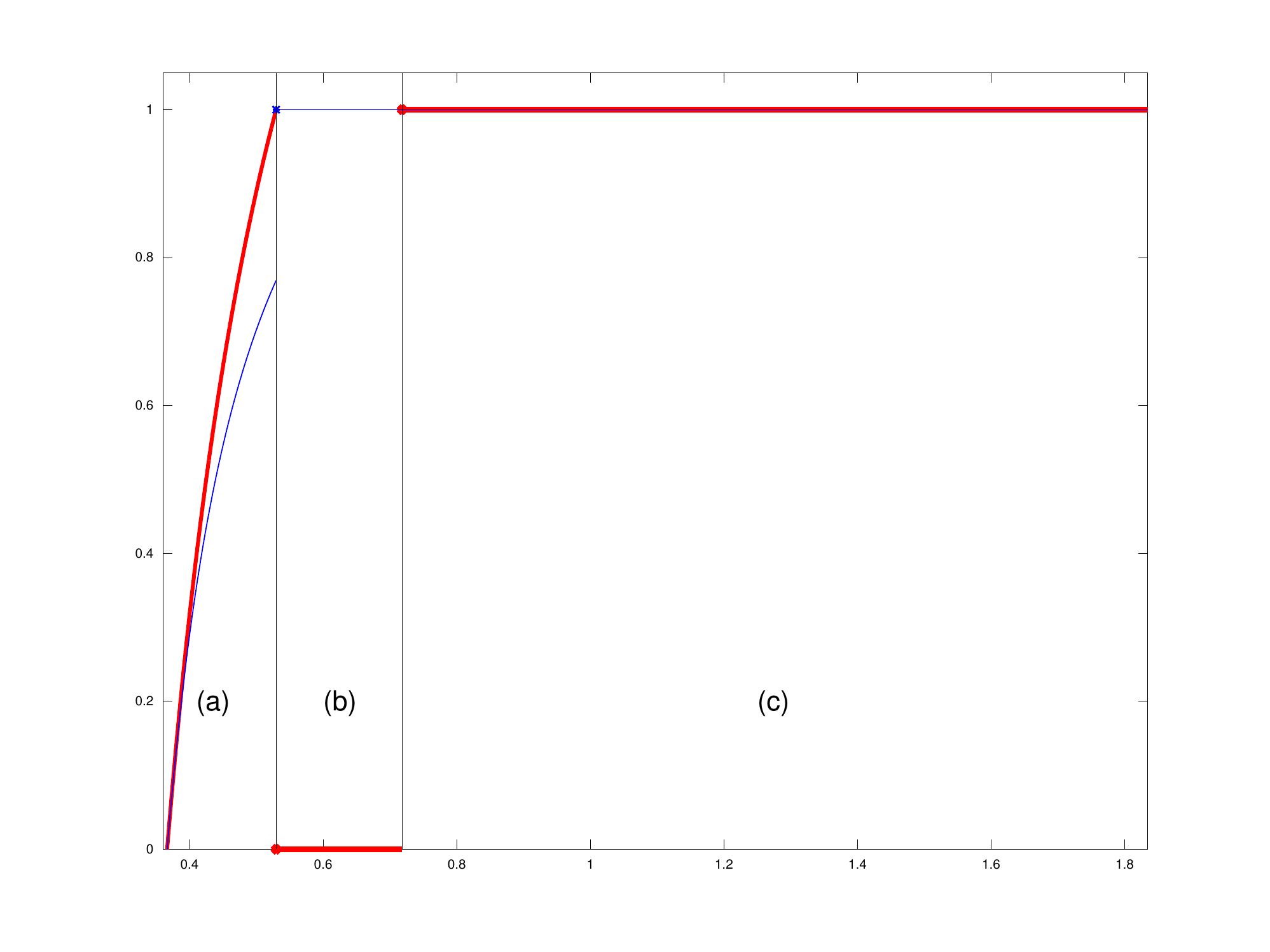}
\caption{Values of $p_1(y)$ (blue) and $p_2(y)$ (red) for $(q_1, q_2, q_S)=(0.5,0.2, 0.3)$.
Areas (a), (b) and (c) are separated by lines at
$Y_1=0.53$ and $Y_2=0.72$ on $[Y_L, Y_F]=[0.37,1.83]$.
Area (d) is at the left of the graph and (e) at the right of it.
Parameters set at 
$(K, \nu, \eta, \mu, \sigma, r, D_1, D_2)=(10,0.01,0.2,0.04,0.3,0.03,1,0.35)$.}
\label{fig:regions}
\end{figure}

\subsubsection{Endpoints and overall strategy}
We have to study junctions of areas (d) with (a), and (c) with (e).
The technical issue has been
settled in Thijssen \textit{et al.} \cite{thijssen2002}.

\begin{lemma}
\label{lem:YL}
Assume $y=Y_L$. Then both agents
have a probability $1/2$ to be leader or follower,
and receive $L(Y_L)=F(Y_L)$.
\end{lemma}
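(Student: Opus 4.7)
The plan is to obtain the lemma as the limit of the mixed-equilibrium behavior of region (a) as $y \downarrow Y_L$. The key input is Proposition \ref{prop:thresholds}, which gives $L(Y_L) = F(Y_L)$ (and $S(Y_L) < L(Y_L)$). Substituting this into the definition of $p_0$ immediately yields $p_0(Y_L) = 0$, and hence $P_1(Y_L) = P_2(Y_L) = 0$ from \ref{eq:V1}.

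First I would note that the mixed-equilibrium strategies $(P_1(y), P_2(y))$ on region (a) are continuous in $y$ (by Lemma \ref{prop:concave_V} together with continuity of $L$, $F$, $S$), so passing to the limit $y \downarrow Y_L$ is legitimate. Next I would apply formula \ref{eq:outcomes_complete} for the outcome probabilities, which only depends on $y$ via $p_0(y)$; setting $p_0 = 0$ gives
\begin{equation*}
(a_1, a_2, a_S) = \left(\tfrac{1}{2}, \tfrac{1}{2}, 0\right) \; .
\end{equation*}
Thus in the limiting game at $y = Y_L$ each agent ends up as leader with probability $1/2$ and as follower with probability $1/2$, while the simultaneous-investment outcome (and thus any regulatory tie-breaking) occurs with probability zero. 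Plugging $(a_1, a_2, a_S) = (1/2, 1/2, 0)$ into \ref{eq:expected utility} and using $L(Y_L) = F(Y_L)$ then gives the expected payoff as $\tfrac{1}{2}L(Y_L) + \tfrac{1}{2}F(Y_L) = L(Y_L) = F(Y_L)$, which is the asserted value.

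The only delicate point — and the main obstacle to a fully rigorous argument — is that the equilibrium strategies themselves degenerate, $(P_1, P_2) \to (0,0)$, which formally violates the nondegeneracy constraint \ref{eq:strategy mixte} used throughout Section 3. This is exactly the boundary phenomenon identified by Thijssen \textit{et al.} \cite{thijssen2002}: although neither agent ``acts'' in the conventional sense at $y = Y_L$, the ratios $p_1/(p_1+p_2-p_1p_2)$ appearing in \ref{eq:a1}–\ref{eq:aS} remain well defined and symmetric under the continuous approach $p_1 \sim p_2 \downarrow 0$, so the induced distribution over leader/follower outcomes is $(1/2, 1/2)$. Once this limiting interpretation is invoked — and it is justified precisely because $L(Y_L) = F(Y_L)$ makes the identity of the leader payoff-irrelevant — the rest of the proof is immediate.
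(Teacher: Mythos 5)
Your limit computation is exactly the second half of the paper's argument: $p_0(Y_L)=0$ forces $P_1,P_2\to 0$, the ratios in \ref{eq:a1}--\ref{eq:aS} remain well defined, and one obtains $(a_1,a_2,a_S)=(1/2,1/2,0)$ and the payoff $L(Y_L)=F(Y_L)$. That part is correct and matches equation \ref{eq:limits bound} in the paper.

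However, there is a genuine gap in how you adjudicate the boundary. The difficulty at $y=Y_L$ is not merely that $(P_1,P_2)$ degenerates to $(0,0)$; it is that there is a competing prescription coming from the left. On $[0,Y_L)$ both agents use the threshold strategy $G^i(Y_L)$, and by right-continuity of $G^i$ this prescribes that both exercise \emph{with probability one} at the instant $Y_L$ is hit. Under that behavior the regulator is invoked with certainty and each agent's expected payoff is $(q_1+q_2)F(Y_L)+q_S S(Y_L)$, which is strictly less than $F(Y_L)$ whenever $q_S>0$, and which moreover yields simultaneous investment with probability $q_S$ rather than the claimed $(1/2,1/2,0)$ outcome. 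The paper's proof explicitly sets these two irreconcilable prescriptions side by side and selects the limiting mixed behavior because its payoff dominates. Your justification --- that the limiting interpretation is legitimate because $L(Y_L)=F(Y_L)$ makes the leader's identity payoff-irrelevant --- explains why the $1/2$--$1/2$ split is innocuous, but it does not rule out the "both exercise with probability one" alternative, which is precisely what must be excluded for the lemma's conclusion (no simultaneous exercise, payoff equal to $F(Y_L)$) to hold. Adding the payoff comparison $(q_1+q_2)F(Y_L)+q_S S(Y_L)<F(Y_L)$ closes the gap.
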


\begin{proof}
The junction of $[0,Y_L)$ with $[Y_L,Y_1]$ is a delicate point.
At the left of point $Y_L$,
no agent wants to invest.
We thus shall use the strategy
$G^i(Y_L)$ for both agents.
By right-continuity of this process,
both agents shall exercise 
with probability 1 at
point $Y_L$.
Each agent receives $R_i (y)$ which
takes value $L(y)=F(y)$ with probability
$q_1+q_2$ and $S(y)$ with probability $q_S$:
\begin{equation}
 \label{eq:payoff comparison}
 E_1(y)=E_2(y) = (q_1+q_2)F(y)+q_S S(y)\; .
\end{equation}
At the right side of point $Y_L$ however,
$P_i$ converges to $0$ when 
$y$ converges to $Y_L$, 
for $i=1,2$.
Therefore,
so do $(p_1, p_2)$ toward $(0,0)$.
We cannot reconcile $G^i_{\tau(Y_L)}(Y_L)$
with $(p_1(Y_L), p_2(Y_L))=(0,0)$
and shall compare the payoffs.
A short calculation provides
\begin{equation}
 \label{eq:limits bound}
 \lim_{y \downarrow Y_L} \frac{a_1(y)}{a_2(y)} =1  \And \lim_{y \downarrow Y_L} a_s(y) =0 \;.
\end{equation}
Therefore
at point $Y_L$,
$(a_1(Y_L), a_2(Y_L), a_S(Y_L))=(1/2, 1/2, 0)$.
It is clear at point $Y_L$
that the second option is better
for both agents.
\end{proof}	

\begin{remark}
	There is a continuity
	of behavior between (d) and (a)
	from the fact that
	regulatory intervention does not
	impact the outcome of the game,
	from the point of view of discrimination
	between the two agents:
	simultaneous investment is improbable
	at point $Y_L$,
	and probability $a_S(y)$
	is continuous and null at this point.
	Consequently,
	the local behavior of agents
	around $Y_L$ is similar 
	to the one given in
	Grasselli \textit{et al.} \cite{grasselli2013}.
	Altogether and observing Figure \ref{fig:regions},
	mixed strategies are right-continuous.
\end{remark}

Let us complete the strategic interaction
of the two agents by summarizing the previous results
in the following Theorem. As we will see in the next section
for singular cases, it extends Theorem 4 in \cite{grasselli2013}.

\begin{theorem}
\label{th:behavior}
Consider the strategic economic situation
presented in Section \ref{sec:game}, with
\begin{equation}
\min\{q_1, q_2, q_S\}> 0 \And q_0=0 \; .
\end{equation}
Then there exists a Markov sub-game perfect equilibrium
with strategies depending on the level of profits as follows:
\begin{enumerate}
\item[(i)] If $y<Y_L$, both agents wait for the profit level to rise and reach $Y_L$.
\item[(ii)] At $y=Y_L$, there is no simultaneous exercise and each agent
has an equal probability of emerging as a leader while the other
becomes a follower and waits until the profit level reaches $Y_F$.
\item [(iii)] If $Y_L<y<Y_1$, each agent choses a mixed strategy
consisting of exercising the option to invest with probability $P_i(y)$.
Their expected payoffs are equal, and the regulator intervenes on
the settlement of positions.
\item[(iv)] If $Y_1\le y <Y_2$, agent one exercises his option
and agent two becomes a follower and waits until $y$ reaches $Y_F$.
\item[(v)] If $Y_2\le y<Y_F$, both agents express their desire
to invest immediately, and the regulator is called.
If one agent is elected leader, the other one becomes
follower and waits until $y$ reaches $Y_F$.
\item[(vi)] If $Y_F \ge y$, both agents act as in (v), but
if a follower emerges, he invest immediately after the other agent. 
\end{enumerate}
\end{theorem}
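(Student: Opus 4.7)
The plan is to assemble the theorem by combining the results established in this section case by case, with particular care at the boundaries between the six regions. Cases (i) and (vi) follow directly from Corollary \ref{cor:d and e}, with case (vi) additionally using Proposition \ref{prop:follower_complete} to assert that the follower acts immediately beyond $Y_F$. Case (ii) is exactly Lemma \ref{lem:YL}. Cases (iii), (iv) and (v) are, respectively, items (a), (b) and (c) of the preceding proposition on the coordination game of Table \ref{tab:1}, whose applicability on $(Y_L, Y_F)$ is guaranteed by Proposition \ref{prop:table}.

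The first selection step is unavoidable in case (iii), where three Nash equilibria coexist on $(Y_L, Y_1)$. I would justify picking the mixed equilibrium $(P_1, P_2)$ by the two arguments already given after equation \ref{eq:outcomes_complete}: it is the unique trembling-hand perfect equilibrium and the only one that continuously extends the symmetric unregulated case treated in \cite{grasselli2013}. With this choice, \ref{eq:expected payoffs} shows that the expected payoff equals $F(y)$ on that interval, realizing the rent equalization principle and supporting the interpretation in (iii).

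Next, I would verify that these local prescriptions form a genuine Markov sub-game perfect equilibrium. The strong Markov property of $Y$, together with the reduction to a one-shot game at each hitting time granted by Proposition \ref{prop:table}, reduces the dynamic problem to local best-response checks in each region. On $[0, Y_L)$ the inequality $L(y) < F(y)$ from Proposition \ref{prop:thresholds} makes waiting strictly dominant. On the sub-intervals of $(Y_L, Y_F)$ the static Nash analysis of Table \ref{tab:1} supplies optimality directly, and for $y \ge Y_F$ the identity $S(y) = F(y) = L(y)$ makes immediate exercise optimal and converts any surviving follower continuation into immediate investment as well.

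The main obstacle is the junction at $Y_L$, where the mixed probabilities $P_i(y)$ vanish as $y \downarrow Y_L$ yet Lemma \ref{lem:YL} prescribes that both agents invest with probability one at $y = Y_L$. I would resolve this exactly as in that lemma: switch from the mixing component $p_i$ to the cumulative component $G^i$ of Definition \ref{def:strategy}, exploit its right-continuity to force simultaneous action at $Y_L$, use \ref{eq:limits bound} to show $a_S \to 0$ so that the regulator effectively splits leadership in half, and finally compare the two candidate payoffs via \ref{eq:payoff comparison}. The remaining junctions at $Y_1$, $Y_2$ and $Y_F$ are continuity points by the defining equations \ref{eq:Y1} and by \ref{eq:threshold_discr}, so pasting across them is automatic.
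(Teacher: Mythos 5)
Your proposal is correct and follows essentially the same route as the paper, which states the theorem explicitly as a summary of the preceding results (Corollary \ref{cor:d and e}, Proposition \ref{prop:table}, the case analysis (a)--(c) of the coordination game, the trembling-hand selection of $(P_1,P_2)$, and Lemma \ref{lem:YL} for the junction at $Y_L$) and gives no separate proof. Your assembly, including the equilibrium-selection justification in region (iii) and the pasting argument at $Y_L$, matches the paper's intended argument.
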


Some comments are in order.
First, we shall emphasize that
the regulator theoretically intervenes
for any situation where $y\ge Y_L$.
However, as explained at the beginning of 
this section, its intervention
becomes mostly irrelevant in continuous time
when agents do not act simultaneously.
Its impact is unavoidable to settle
the final payoff after the game,
but its influence on agents' strategies
boils down to the interval $(Y_L, Y_2]$.

At point $Y_1$,
there is a strong 
discontinuity
in the optimal behavior
of both agents.
For $y<Y_1$,
the mixed strategy
used by agent two
tends toward a pure strategy
with systematic investment.
However at the point itself,
the second agent defers
investment and becomes
the follower.
It follows from the fact that
agent one is indifferent between
being follower or letting
the regulator decide the outcome at this point.
He suddenly seeks, without hesitation, for the
leader's position,
creating a discontinuity
in his behavior.
The same happens for agent two
at $Y_2$, creating another discontinuity
in the strategy of the latter.

\section{Singular cases}
\label{sec:singular}

The proposed framework encompasses
in a natural way the two main competition
situations encountered in the literature, 
namely the Cournot competition and the 
Stackelberg competition.
By introducing minor changes 
in the regulatory intervention,
it is also possible to
represent the situation of
Stackelberg leadership advantage.
Finally,
our setting allows to
study a new and weaker type of advantage
we call \emph{weak Stackelberg leadership}.

\subsection{The Cournot and Stackelberg competitions}
\label{sec:singular_cournot}

The Cournot duopoly refers to
a situation when both players 
adjust quantities simultaneously.
This is the framework of
Grasseli \textit{et al.}
\cite{grasselli2013}, 
involving the payoff $S(y)$ if
agents invest at the same time.
Recalling Definition \ref{def:regulator},
this framework corresponds to 
\begin{equation}
\label{eq:cournot}
(q_1, q_2, q_S) = (0,0,1) \; .
\end{equation}
We notice that
agents then become symmetrical.
This appears from
\ref{eq:V1} which
implies $P_1(y)=P_2(y)=p_0(y)$.
Additionally, $p_0(y)\in (0,1)$
if $y\in (Y_L, Y_F)$,
and is increasing on this interval
according to Lemma \ref{prop:thresholds}.

The Stackelberg competition
refers to a situation
where competitors adjust the
level $D_{Q(t)}$ sequentially.
In a preemptive game, this implies
that in the case of simultaneous
investment, one agent is elected leader
and the other one becomes follower.
This setting implies an exogenous
randomization procedure, 
which by symmetry is given by
a flipping of a fair coin.
The procedure is described as such in
Grenadier \cite{grenadier1996},
Weeds \cite{weeds2002},
Tsekeros \cite{tsekrekos2003}
or Paxson and Pinto \cite{paxson2005}.
Recalling Definition \ref{def:regulator},
the setting is retrieved by fixing
\begin{equation}
\label{eq:stackelberg}
(q_1,q_2, q_S)=(1/2, 1/2, 0)\; .
\end{equation}
The implication in our context
is the following:
by symmetry,
$P_1(y) = P_2(y) = 2$.
Therefore, the interval
$(Y_L, Y_2)$ reduces to nothing,
i.e., $Y_L=Y_1=Y_2$,
and the strategical behavior boils down
to (i), (v) and (vi)
in Theorem \ref{th:behavior}.

\begin{remark}
\label{rem:stackelberg}
Notice that any 
combination $q_2=1-q_1 \in (0,1)$
provides the same result as in \ref{eq:stackelberg}, 
as $P_i(y) = 1/q_i>1$ for $i=1,2$.
The strategic behavior is unchanged
with an unfair coin flipping.
This is foreseeable as
$q_i L(y)+ (1-q_i)F(y)>F(y)$
on $(Y_L, Y_F)$.
This will also hold for a convex
combination, i.e.,
for risk-averse agents.
\end{remark}

\begin{remark}
\label{rem:continuous}
Assume now symmetry in the
initial framework, i.e.,
$q:=q_1=q_2 \in (0, 1/2)$.
We have $Y_S:=Y_1=Y_2$, 
and the region (b) reduces to nothing.
By recalling \ref{eq:Y1},
we straightly obtain
\begin{equation}
\lim\limits_{q\uparrow 1/2}Y_S = Y_L \ \And \
\lim\limits_{q\downarrow 0}Y_S = Y_F \; .
\end{equation}
Therefore, 
the regulation intervention
encompasses in a continuous 
manner
the two usual types of games described
above.
\end{remark}

\subsection{Stackelberg leadership}
\label{sec:stackelberg leadership}

This economic situation
represents an asymmetrical competition
where the roles of leader and follower
are predetermined exogenously.
It can be justified as in Bensoussan \textit{et al.}
\cite{bensoussan2010} by regulatory
or competitive advantage.
However Definition \ref{def:regulator}
does not allow to retrieve this case directly.
Instead, we can extend it by
conditioning the probability quartet $\{q_0, q_1, q_2, q_S\}$.
In this situation,
the probability $\P^+$ depends 
on $\F$-adapted events in the following manner:
\begin{equation}
\label{eq:leadership}
\P^+(\alpha_0 | t < \tau_1)=1 \And \P^+ (\alpha_1 | t\ge \tau_1 )=1 \; .
\end{equation}
This means that no
investment is allowed until agent one
decides to invest, which leads
automatically to the leader position.
The strategical interaction is then pretty different from
the endogenous attribution of roles.
See Grasselli and \textit{al.}\cite{grasselli2013}
for a comparison of this situation
with the Cournot game.

\subsection{The weak Stackelberg advantage}
\label{sec:weak}

We propose here a different type of
competitive situation.
Consider the investment timing problem
in a duopoly where agent one has, 
as in the Stackelberg leadership,
a significant advantage due to
exogenous qualities.
We assume that for particular reasons,
the regulator only allows one
agent at a time to invest in
the shared opportunity.
The advantage of agent one thus
translates into a preference
of the regulator, but only in the case of
simultaneous move of the two agents.
That means that agent two
can still, in theory, preempt agent one.
This situation can be covered by simply setting
\begin{equation}
\label{eq:weak stackelberg}
(q_1,q_2, q_S)=(1, 0, 0)\; .
\end{equation}
In this setting,
the results of Section \ref{sec:nash}
apply without loss of generality.

\begin{proposition}
\label{prop:weak stackelberg}
Assume \ref{eq:weak stackelberg}. 
Then the optimal strategy for
agents one and two is given by
$(G^1 (Y_L), G^2(Y_F))$.
\end{proposition}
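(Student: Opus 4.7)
My plan is to specialize the general equilibrium analysis of Theorem \ref{th:behavior} to the singular parameter point $(q_1,q_2,q_S)=(1,0,0)$ and show that all intermediate regions collapse onto the preemption threshold $Y_L$. The computational heart of the argument is a direct substitution in \ref{eq:payoff_arbitrator}, yielding $S_1(y)=L(y)$ and $S_2(y)=F(y)$. Plugging this into the defining equations \ref{eq:Y1}, each condition reduces to $F(Y_i)=L(Y_i)$, which by Proposition \ref{prop:thresholds} has unique solution $Y_L$; thus $Y_1=Y_2=Y_L$, and the intermediate regions (iii) and (iv) of Theorem \ref{th:behavior} disappear.

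Next, I would inspect the coordination game of Table \ref{tab:1} on the interval $(Y_L,Y_F)$ with the reduced payoffs. Exercising strictly dominates deferring for agent one, since his payoff is $L(y)>F(y)$ against either move of agent two. Agent two is indifferent between exercising and deferring whenever agent one exercises, each yielding $F(y)$. The Nash equilibrium in pure strategies is therefore $p_1=1$, and because $q_1=1$ the regulator deterministically selects agent one as leader while agent two is relegated to the follower's position. Combining this with the endpoint behavior---both defer for $y<Y_L$ by Corollary \ref{cor:d and e}(d); at $y=Y_L$ the analysis of Lemma \ref{lem:YL} applied with $q_1=1$ yields agent one as leader with probability one; and both exercise for $y\ge Y_F$ by Corollary \ref{cor:d and e}(e)---I conclude that agent one's effective investment time is $\tau(Y_L)$, so his strategy is $G^1(Y_L)$; agent two then faces the follower's optimal stopping problem \ref{eq:follower} whose solution triggers at $\tau(Y_F)$, giving $G^2(Y_F)$.

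The main obstacle is that Theorem \ref{th:behavior} was stated under $\min\{q_1,q_2,q_S\}>0$, a condition violated here, and $P_2(y)$ in \ref{eq:V1} becomes formally $+\infty$ because $L-S_1=0$. I would resolve this in one of two equivalent ways: either pass to the limit along a sequence $(q_1^n,q_2^n,q_S^n)\to(1,0,0)$ with all components strictly positive and invoke continuity of the thresholds $Y_1^n,Y_2^n$ (both tending to $Y_L$) together with the degeneracies $P_1^n\to 1$ and $P_2^n\to+\infty$, so that the mixed equilibrium collapses to the pure profile above; or argue directly on the reduced payoff matrix, where the strict dominance for agent one and indifference of agent two remove any true coordination component and hence any division by zero. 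Either route delivers $(G^1(Y_L),G^2(Y_F))$ as the Markov subgame-perfect equilibrium.
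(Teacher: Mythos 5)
Your proposal is correct and follows essentially the same route as the paper: analyze the coordination game of Table \ref{tab:1} on $[Y_L,Y_F)$, observe that with $(q_1,q_2,q_S)=(1,0,0)$ the payoffs reduce to $S_1=L$ and $S_2=F$ so that $(p_1,p_2)=(1,0)$ is the Nash equilibrium there, and invoke Corollary \ref{cor:d and e} elsewhere. You supply details the paper's one-line proof leaves implicit --- notably the collapse $Y_1=Y_2=Y_L$ and the treatment of the degenerate $P_2=+\infty$ --- but the underlying argument is the same.
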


\begin{proof}
It suffices to
consider the game of Table \ref{tab:1}
on the interval $[Y_L, Y_F)$.
The Nash equilibrium is given
by $(p_1(y), p_2(y))=(1,0)$ for 
any $y\in [Y_L, Y_F]$.
Corollary \ref{cor:d and e} applies
for other values.
\end{proof}

\begin{remark}
As in Remark \ref{rem:continuous},
we can observe that this situation
corresponds to strategy (iv) of Theorem \ref{th:behavior}
expanded to the interval $[Y_L, Y_F)$.
This situation can be obtained continuously
with $q_1$ converging to one.
\end{remark}

Following Grasselli \textit{et al.} \cite{grasselli2013},
we compare the advantage given
by such an asymmetry to the usual
Cournot game.
In the latter,
the positive difference
between the Stackelberg leadership
and the Cournot game
provides a financial advantage
which is called a priority option.
By similarity, we call
the marginal advantage of the weak Stackelberg
advantage on the Cournot game
a \textit{preference option}.

\begin{corollary}
\label{cor:preference option}
Let us assume $q_2=q_0=0$.
Let us denote $E_1^{(q_1,q_S)}(y)$
the expected payoff of agent one 
following from Theorem \ref{th:behavior}
when $(q_1, q_S)$ is given, for a level $y\in\R_+$.
Then the preference option value is given by
$\pi^0(y):=E_1^{(1,0)}(y) - E_1^{(0,1)}(y)$ 
for all $y\in \R_+$.
Its value is equal to
\begin{equation}
\label{eq:preference option}
\pi^0(y) = (L(y)-F(y))^+ \quad \forall y\in \R_+ \; .
\end{equation}
\end{corollary}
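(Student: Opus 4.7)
The approach I would take is to evaluate $E_1^{(0,1)}$ (the Cournot case) and $E_1^{(1,0)}$ (the weak Stackelberg case) region by region on the three intervals $[0,Y_L)$, $[Y_L,Y_F)$ and $[Y_F,+\infty)$ isolated by Proposition \ref{prop:thresholds}, and then take the difference. In the Cournot setting I would rely on Theorem \ref{th:behavior} and the rent equalization identity recorded in \ref{eq:expected payoffs}, while in the weak Stackelberg setting I would apply Proposition \ref{prop:weak stackelberg} directly.

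For $y\ge Y_F$, Corollary \ref{cor:d and e}(e) yields immediate simultaneous exercise in both settings, and by \ref{eq:threshold_discr} we have $L(y)=F(y)=S(y)$, so $E_1^{(1,0)}(y)=E_1^{(0,1)}(y)=L(y)$ and $\pi^0(y)=0=(L(y)-F(y))^+$. On $[Y_L,Y_F)$, Proposition \ref{prop:weak stackelberg} gives $(p_1,p_2)=(1,0)$, whence agent one invests at once as the regulator's preferred candidate and $E_1^{(1,0)}(y)=L(y)$; the Cournot case has $q_S=1>0$, so by the rent equalization expressed in \ref{eq:expected payoffs} we have $E_1^{(0,1)}(y)=F(y)$. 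Since $L(y)\ge F(y)$ on this interval by \ref{eq:threshold_discr}, the difference is $L(y)-F(y)=(L(y)-F(y))^+$.

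For $y<Y_L$, Corollary \ref{cor:d and e}(d) forces both agents in both settings to defer until $\tau(Y_L)$, so the remaining task is to identify the discounted payoff collected at that stopping time. In the weak Stackelberg case agent one becomes leader with probability one at $\tau(Y_L)$ and receives $L(Y_L)$, while in the Cournot case Lemma \ref{lem:YL} shows that each agent receives $F(Y_L)$. The identity $L(Y_L)=F(Y_L)$ from \ref{eq:threshold_discr} makes these two discounted expectations coincide, so $\pi^0(y)=0$, in agreement with $(L(y)-F(y))^+=0$ which holds on this interval because $L(y)<F(y)$. The step demanding the most care is checking this boundary matching at $Y_L$: one needs the $\beta$-homogeneous form of $F$ on $[0,Y_F]$ given in Proposition \ref{prop:follower_complete} and the classical first-passage formula $\E^{\Q}[e^{-r\tau(Y_L)}\mid Y_0=y]=(y/Y_L)^{\beta}$ to confirm that both $\E^{\Q}[e^{-r\tau(Y_L)}L(Y_L)\mid Y_0=y]$ and $\E^{\Q}[e^{-r\tau(Y_L)}F(Y_L)\mid Y_0=y]$ collapse to $F(y)$, after which the three regions assemble to give \ref{eq:preference option}.
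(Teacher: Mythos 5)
Your proposal is correct and follows essentially the same route as the paper: it uses the rent-equalization identity \ref{eq:expected payoffs} to get $E_1^{(0,1)}=F$ on $[Y_L,Y_F)$, Proposition \ref{prop:weak stackelberg} to get $E_1^{(1,0)}=L$ there, and the identity $L(Y_L)=F(Y_L)$ for the deferral region. The only difference is that you make explicit the first-passage discounting check $\E^{\Q}[e^{-r\tau(Y_L)}F(Y_L)\mid Y_0=y]=F(y)$, which the paper leaves implicit.
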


\begin{proof}
The proof is straightforward and follows
from \ref{eq:expected payoffs}, where
$E_1^{(0,1)}(y)=F(y)$ for $y\in [Y_L, Y_F]$.
Following Proposition \ref{prop:weak stackelberg},
agent one shall invest at $\tau(Y_L)$ if $y\le Y_L$
at first. In this condition,
his payoff is $L(Y_L)=F(Y_L)$, which provides
\ref{eq:preference option}.
\end{proof}

This option gives
an advantage to its
owner, agent one, without
penalizing the other agent,
who can always expect the
payoff $F$.
A comparison with the priority option is given
in Figure \ref{fig:options}.

\begin{figure}[ht]
\centering
\includegraphics[width=12cm, height=8cm]{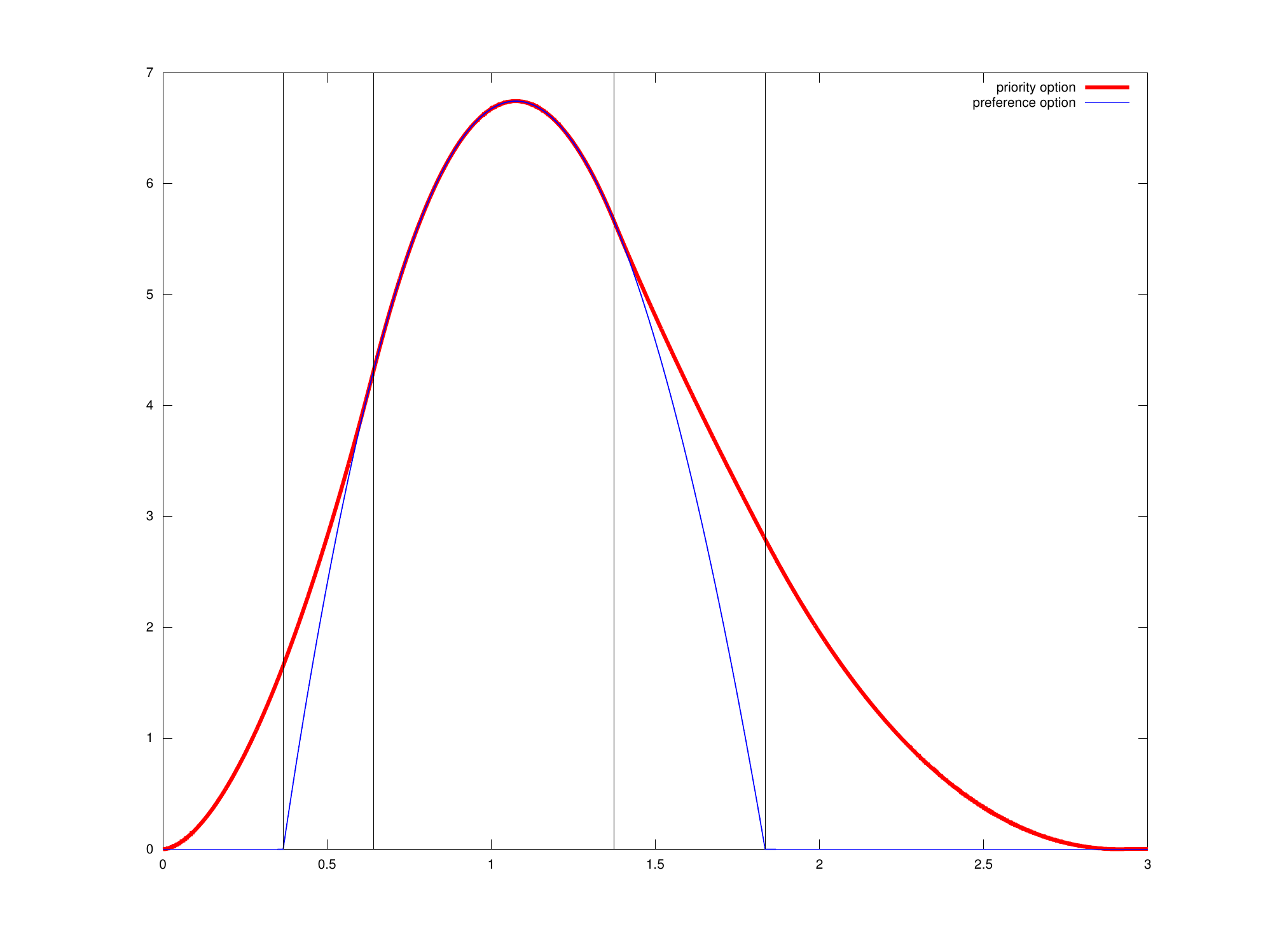}
\caption{Priority option value (red) and
Preference option value (blue) in function of $y$.
Vertical lines at $Y_L=0.37$, $Y_1=0.64$, $Y_2=1.37$ and $Y_F=1.83$.
Option values are equal on $[Y_1, Y_2]$.
Same parameters as in Figure \ref{fig:regions}.}
\label{fig:options}
\end{figure}

A very similar situation to the
weak Stackelberg advantage can be observed
if we take $q_2=0$ but $q_S>0$.
The economic situation however loses
a part of its meaning.
It would convey the situation where
agent two is accepted as investor
only if he shares the market with agent one
or become a follower.
In this setting we can apply also
the results of Section \ref{sec:nash}.
We observe from
definition \ref{eq:Y1}
that in this case,
$Y_2=Y_F$
and $Y_1$
verifies
$F(Y_1) = q_1 L(Y_1) + (1-q_1)S(Y_1)$.
The consequence is straightforward:
the interval $[Y_1, Y_2)$
expands to $[Y_1, Y_F)$.
The fact that $Y_1>Y_L$
for $q_1<1$ has
also a specific implication.

In that case, the equilibrium
$(1,0)$ is more relevant
than $(P_1(y), P_2(y))$
on $[Y_L, Y_1)$.
Indeed,
if agent one invests systematically
on that interval
then agent two has no chance 
of being the leader
since $q_2=0$.
Thus $p_2=0$ and the payoffs become
\begin{equation}
\label{eq:expected_payoff2}
(E_1(y), E_2(y))=\left( L(y), F(y) \right) \text{ for }y\in [Y_L, Y_1)\; .
\end{equation}
In opposition of the
trembling-hand equilibrium,
this pure strategy
can be well figured by
being called
\emph{steady-hand}
Comparing \ref{eq:expected payoffs}
to \ref{eq:expected_payoff2},
agent one shall use the pure strategy,
whereas agent two is indifferent between
the mixed and the pure strategy.
Setting $q_2=0$ thus provides
a preference option to agent one.

\section{Aversion for confrontation}
\label{sec:aversion}

The introduction of the regulator
rendering the market incomplete,
utility indifference pricing can be invoked
to evaluate payoffs.
Although its action
is purely random in the presented model,
the regulator represents a third
actor in the game.
We study here the the impact
of risk aversion
on the coordination game only,
which can be seen as an original
dimension we denote
\emph{aversion for confrontation}.

\subsection{Risk aversion in complete market}
We keep the market model of 
Section \ref{sec:game}.
However,
we now endow each agent
with the same CARA utility function
\begin{equation}
U(x)=- \exp (- \gamma x)
\end{equation}
where $\gamma>0$ is the risk-aversion of agents.
The function $U$ is strictly concave and
is chosen to avoid
dependence on initial wealth.
Recalling Remark \ref{rem:incomplete},
the market without regulator is
complete and
free of arbitrage,
so that both agents still
price the leader,
the follower and 
sharing positions with
the unique risk-neutral probability $\Q$.
Thus,
agents compare the utility
of the different investment option market prices, 
denoted
$l(y):=U(L(y))$, 
$f(y):=U(F(y))$
and $s(y):=U(S(y))$.
The $s_i$, $i=1,2$
are defined the same way.
When needed,
variables will be indexed
with $\gamma$ to make 
the dependence explicit.
The definition of regulation is updated.

\begin{definition}
\label{def:regulator2}
Fix $t$ and $Y_t=y$. For $i=1,2$, 
if $j=3-i$ is the index of the opponent
and $\tau_j$ his time of investment, 
then agent $i$ receives utility
\begin{equation}
\label{eq:regulator2}
r_i(t,y):=\left\{
\begin{array}{ll}
0 & \text{if }\alpha=\alpha_0\\
l(y)\ind{t\le \tau_j}+f(y)\ind{t>\tau_j} & \text{if }\alpha=\alpha_i\\
f(y)\ind{t=\tau_j} & \text{if }\alpha=\alpha_j\\
l(y)\ind{t< \tau_j}+s(y)\ind{t=\tau_j}+f(y)\ind{t>\tau_j} & \text{if }\alpha=\alpha_S\\
\end{array}
\right. \; .
\end{equation}
\end{definition}

From Definition  \ref{def:regulator2},
it appears that by
monotonicity of $U$,
the game is strictly the same as in Section \ref{sec:game},
apart from payoffs.
Both agents defer for $y<Y_L$ and
both act immediately for $y\ge Y_F$.
The incomplete market setting is now
handled with a utility maximization criterion.
Each agent uses now a mixed strategy $p^{\gamma}_i$,
in order to maximize an expected utility 
slightly different from \ref{eq:expected utility} on
$(Y_L, Y_F)$:
\begin{equation}
\label{eq:expected_utility_incomplete}
E^{\gamma}_1(y)=(a^{\gamma}_1+a^{\gamma}_s q_1) l(y) + (a^{\gamma}_2+a^{\gamma}_S q_2) f(y)+ a^{\gamma}_s q_S s(y) 
\end{equation}
Results of Section \ref{sec:nash}
thus hold by
changing $(L, F, S)$ for $(l, f, s)$.
It follows that in that case
\begin{equation}
\label{eq:pi_incomplete}
P_{i, \gamma}(y)  = \frac{l(y)-f(y)}{q_i(l(y)-f(y))-q_{S} (l(y)-s(y))} \text{ with } i\in \{1,2\}
\end{equation}
play the central role
and that
optimal strategic interaction
of agents
can be characterized by the value of $y$,
or equivalently on the interval $[Y_F, Y_L]$ by
the values of $P_{1, \gamma}(y)$ and $P_{2,\gamma}(y)$.
The question we address
in this section is how 
risk-aversion influences the
different strategic interactions.

\subsection{Influence of $\gamma$ on strategies} 

First,
aversion for confrontation
is expressed through diminishing
probability of intervention with $\gamma$.

\begin{proposition}
\label{prop:risk on probability}
Assume $q_S=1$, and $y\in (Y_L, Y_S)$.
Denote $p_\gamma(y):=P_{1,\gamma}(y)=P_{2,\gamma}(y)$.
Then $p_{\gamma}\le p_0$ and furthermore,
\begin{equation}
\label{eq:limit risk aversion}
\lim\limits_{\gamma\downarrow 0}p_{\gamma}=p_0 \And
\lim\limits_{\gamma \uparrow \infty} p_\gamma=0\; .
\end{equation}
\end{proposition}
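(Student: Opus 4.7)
The plan is to reduce $p_\gamma(y)$ to an elementary exponential ratio and then invoke the monotonicity of a single scalar function. First, observe that $q_S=1$ forces $q_1=q_2=0$, so the derivation of \ref{eq:V1}, carried out with the utility payoffs $l,f,s$ in place of $L,F,S$, yields the symmetric expression
\[
p_\gamma(y) \;=\; \frac{l(y)-f(y)}{q_1(l(y)-f(y)) + q_S(l(y)-s(y))} \;=\; \frac{l(y)-f(y)}{l(y)-s(y)}.
\]
By Proposition \ref{prop:thresholds} one has $S(y)<F(y)<L(y)$ on $(Y_L,Y_F)\supset (Y_L,Y_S)$, and strict monotonicity of $U$ preserves this ordering for $s,f,l$, so both numerator and denominator above are strictly positive on $(Y_L,Y_S)$.

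Next I substitute $U(x)=-e^{-\gamma x}$ and pull the common factor $e^{-\gamma L(y)}$ out of both the numerator and the denominator, obtaining
\[
p_\gamma(y) \;=\; \frac{e^{\gamma a(y)}-1}{e^{\gamma b(y)}-1}, \qquad a(y):=L(y)-F(y), \quad b(y):=L(y)-S(y),
\]
with $0<a(y)<b(y)$. The two limits in \ref{eq:limit risk aversion} then fall out: a first-order Taylor expansion as $\gamma\downarrow 0$ gives $p_\gamma\to a/b=(L-F)/(L-S)=p_0$, whereas as $\gamma\uparrow\infty$ the dominant contributions are $e^{\gamma a}$ and $e^{\gamma b}$, so $p_\gamma\sim e^{-\gamma(b-a)}\to 0$ since $b>a$.

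For the inequality $p_\gamma\le p_0$, cross-multiplying and dividing by $\gamma>0$ transforms it into $(e^{\gamma a}-1)/(\gamma a)\le (e^{\gamma b}-1)/(\gamma b)$, which is exactly $g(\gamma a)\le g(\gamma b)$ for the function $g(x):=(e^x-1)/x$. Writing $g(x)=\sum_{k\ge 0}x^k/(k+1)!$ shows that all Taylor coefficients are non-negative, so $g$ is non-decreasing (in fact strictly increasing) on $(0,\infty)$, and the bound follows from $0<\gamma a<\gamma b$. I do not anticipate any substantive obstacle here; the only care needed is to correctly recover the symmetric expression for $p_\gamma$ from \ref{eq:pi_incomplete} in the collapsed case $q_S=1$, after which the statement reduces to a short calculus exercise on a one-variable function.
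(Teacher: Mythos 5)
Your proof is correct and follows essentially the same route as the paper: both reduce $p_\gamma$ to the ratio $\bigl(e^{\gamma(L-F)}-1\bigr)/\bigl(e^{\gamma(L-S)}-1\bigr)$, obtain the bound $p_\gamma\le p_0$ from the monotonicity of $x\mapsto (e^{x}-1)/x$ (the paper phrases this via convexity of $u(x)=e^{\gamma x}-1$ with $u(0)=0$, you via its Taylor coefficients), and get the two limits by a first-order expansion at $\gamma=0$ and the dominant exponential $e^{-\gamma(F-S)}$ at infinity.
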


\begin{proof}
According to 
\ref{eq:pi_incomplete} with $q_S=1$,
$p_{\gamma}(y)\in (0,1)$ on $(Y_L, Y_S)$.
From \ref{eq:pi_incomplete} we get
\begin{equation}
p_\gamma(y)=\frac{l(y)-f(y)}{l(y)-s(y)}
= \frac{-e^{\gamma L(y)}+e^{\gamma F(y)}}{-e^{\gamma L(y)}+e^{\gamma S(y)}}
= \frac{e^{\gamma (L-F(y))}-1}{e^{\gamma (L-S(y))}-1}\; .
\end{equation}
Since 
$u(x) := -1-U(-x) = e^{\gamma x}-1$ is
a positive strictly convex function on $\R_+$
with $u(0)=0$,
we have that
\begin{equation}
\label{eq:probability_gamma}
p_\gamma (y)=\frac{u(L(y)-F(y))}{u(L(y)-S(y))}< \frac{L(y)-F(y)}{L(y)-S(y)} =: p_0(y)\; .
\end{equation}
For $\gamma$ going to zero,
we apply l'H\^{o}pital's rule
to obtain that 
$\lim_{\gamma \downarrow 0} p_\gamma = p_0$.
The other limit follows from
expliciting the utility function:
\begin{equation}
\label{eq: p limit gamma}
\lim\limits_{\gamma \uparrow \infty} p_\gamma (y)=\lim\limits_{\gamma \uparrow \infty} e^{-\gamma (F(y)-S(y))}=0\; .
\end{equation}
\end{proof}

\begin{remark}
\label{rem:p gamma}
Notice that there is no
uniform convergence since
$p_0$ is continuous and
$p_0(Y_F)=1$.
The above convergence holds for all $y\in [Y_L, Y_F)$.
It is clear from \ref{eq:probability_gamma}
that $p_\gamma$ is monotonous
in $\gamma$ on $\R_+^*$.
Then according to \ref{eq: p limit gamma},
it is convex decreasing with $\gamma$.
\end{remark}

For the general case $q_S<1$,
the above result still holds, but
on a reduced interval.
Nevertheless, this interval depends on $\gamma$.

\begin{proposition}
\label{prop:modifications}
Assume $\min\{q_1, q_2, q_S\}>0$.
Let $Y_{1,\gamma}\in [Y_L, Y_F]$ be such that
$P_{2,\gamma}(Y_{1,\gamma})=1$, and 
$Y_{2,\gamma}\in [Y_L, Y_F]$ such that
$P_{1,\gamma}(Y_{2,\gamma})=1$.
Then for $i=1,2$,
$Y_{i,\gamma}$ is increasing in $\gamma$
and
\begin{equation}
\label{eq: y gamma}
\lim\limits_{\gamma \uparrow \infty} Y_{i,\gamma} = 
Y_F\; .
\end{equation}
\end{proposition}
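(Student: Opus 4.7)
The plan is to reformulate both indifference equations as level sets of a single function of two variables and then establish separate monotonicity in each. Dividing numerator and denominator of $P_{i,\gamma}$ by $l(y)-f(y)$, I would rewrite $P_{j,\gamma}(y)=1$ as $h_\gamma(y)=(1-q_j)/q_S$ where
\[
h_\gamma(y):=\frac{l(y)-s(y)}{l(y)-f(y)}=\frac{e^{\gamma(L(y)-S(y))}-1}{e^{\gamma(L(y)-F(y))}-1},\qquad j=3-i.
\]
Both target constants exceed $1$ since $\min\{q_1,q_2,q_S\}>0$, so $Y_{1,\gamma}$ and $Y_{2,\gamma}$ are characterized by $h_\gamma(Y_{1,\gamma})=(1-q_2)/q_S$ and $h_\gamma(Y_{2,\gamma})=(1-q_1)/q_S$ respectively.

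The first step would be to show that, for every fixed $\gamma\ge 0$, $y\mapsto h_\gamma(y)$ is strictly decreasing on $(Y_L,Y_F)$ with $h_\gamma(Y_L^+)=+\infty$ and $h_\gamma(Y_F^-)=1$. The left limit is immediate from $L(Y_L)=F(Y_L)$, which makes the denominator vanish. For the right limit I would use the smooth-pasting relation $F'(Y_F)=S'(Y_F)=D_2/\delta$ obtained from the explicit formulas of Propositions \ref{prop:follower_complete}--\ref{prop:leader_complete}: it forces $d_S(y):=F(y)-S(y)=O((y-Y_F)^2)$ while $d_L(y):=L(y)-F(y)$ vanishes only linearly at $Y_F$, so $d_S/d_L\to 0$ and $h_\gamma\to 1$. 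The interior $y$-monotonicity would be established by computing the sign of the derivative of $T_\gamma(y):=(l-f)/(f-s)=(1-e^{-\gamma d_L})/(e^{\gamma d_S}-1)$: expansion of the numerator reduces, up to positive exponential factors, to the same type of expression as the risk-neutral quantity $g(y)=d_L'\,d_S-d_S'\,d_L$ already shown positive in Lemma \ref{prop:concave_V}.

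The second step would show that for every fixed $y\in(Y_L,Y_F)$, $\gamma\mapsto h_\gamma(y)$ is strictly increasing on $(0,\infty)$ with limit $+\infty$ at $+\infty$. Setting $a:=L(y)-F(y)$ and $b:=L(y)-S(y)$ with $0<a<b$, the logarithmic derivative of $\gamma\mapsto(e^{\gamma b}-1)/(e^{\gamma a}-1)$ equals $b/(1-e^{-\gamma b})-a/(1-e^{-\gamma a})$, and the elementary function $c\mapsto c/(1-e^{-\gamma c})$ is strictly increasing in $c>0$: its derivative has the sign of $\varphi(\gamma c)$ with $\varphi(x):=1-(1+x)e^{-x}$ satisfying $\varphi(0)=0$ and $\varphi'(x)=xe^{-x}>0$. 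Divergence as $\gamma\to\infty$ then follows from the bound $h_\gamma(y)\ge e^{\gamma d_S(y)}-1$, a consequence of $e^{\gamma b}-1\ge e^{\gamma a}(e^{\gamma(b-a)}-1)$.

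Combining the two steps closes the argument. For $\gamma'>\gamma$, step 2 gives $h_{\gamma'}(Y_{i,\gamma})>h_\gamma(Y_{i,\gamma})=(1-q_j)/q_S$, and by the strict $y$-monotonicity from step 1 the unique level-set solution $Y_{i,\gamma'}$ must lie strictly to the right of $Y_{i,\gamma}$, proving the monotonicity claim. For the limit, $d_S$ admits a positive minimum on any compact $[Y_L,Y_F-\epsilon]$, so the lower bound $h_\gamma\ge e^{\gamma d_S}-1$ forces $h_\gamma$ to eventually exceed $(1-q_j)/q_S$ uniformly there; hence $Y_{i,\gamma}>Y_F-\epsilon$ for $\gamma$ large enough, which establishes \ref{eq: y gamma}. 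The hardest part is the interior $y$-monotonicity of step 1: the exponential nonlinearity complicates the direct computation of Lemma \ref{prop:concave_V}, and one must carefully verify that the additional exponential factors do not spoil the sign of $T_\gamma'$ for arbitrary $\gamma\ge 0$.
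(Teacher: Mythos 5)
Your proposal is correct and follows the same skeleton as the paper's proof: both arguments reduce the claim to monotonicity in $y$ (for fixed $\gamma$) and monotonicity in $\gamma$ (for fixed $y$) of a single threshold function, and working with $h_\gamma=(l-s)/(l-f)=1/p_\gamma$ instead of $P_{i,\gamma}=1/(q_i+q_S h_\gamma)$ is only a cosmetic reparametrization. Where you differ is in execution, and on both counts your version is tighter than the paper's. For the $\gamma$-monotonicity, the paper leans on Remark \ref{rem:p gamma} (``it is clear from \ref{eq:probability_gamma} that $p_\gamma$ is monotonous in $\gamma$''); your logarithmic-derivative computation with $\varphi(x)=1-(1+x)e^{-x}$ is an actual proof of that assertion. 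For the limit \ref{eq: y gamma}, the paper argues that the indifference equation forces $L(Y_{i,\gamma})-F(Y_{i,\gamma})\to 0$ and concludes $Y_{i,\gamma}\to Y_F$; on its own this does not exclude accumulation at $Y_L$, where $L-F$ also vanishes, whereas your uniform lower bound $h_\gamma\ge e^{\gamma d_S}-1$ on $[Y_L,Y_F-\epsilon]$ closes that loophole cleanly. Your boundary values $h_\gamma(Y_L^+)=+\infty$ and $h_\gamma(Y_F^-)=1$ (the latter via $F'(Y_F)=S'(Y_F)=D_2/\delta$, so that $F-S$ vanishes to second order while $L-F$ vanishes to first order) are also correct and give existence of the crossing, which the paper takes for granted.

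The one step you leave open --- strict decrease of $h_\gamma$ in $y$ for fixed $\gamma>0$ --- is precisely the step the paper does not prove either: it relies on the blanket claim that the results of Section \ref{sec:nash}, hence Lemma \ref{prop:concave_V}, survive the substitution $(L,F,S)\mapsto(l,f,s)$. The step does go through, and more easily than you anticipate. With $d_L:=L-F$ and $d_S:=F-S$ one finds
\begin{equation*}
\frac{d}{dy}\log T_\gamma(y)=\gamma\left[\frac{d_L'(y)}{e^{\gamma d_L(y)}-1}-\frac{d_S'(y)}{1-e^{-\gamma d_S(y)}}\right].
\end{equation*}
Since $d_S'=F'-S'=\frac{D_2}{\delta}\bigl((y/Y_F)^{\beta-1}-1\bigr)<0$ on $(Y_L,Y_F)$, the second term is positive, so only the region where $d_L'<0$ needs attention. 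There the elementary bounds $e^{\gamma d_L}-1\ge\gamma d_L$ and $1-e^{-\gamma d_S}\le\gamma d_S$ reduce the required inequality to $d_L'\,d_S-d_S'\,d_L\ge 0$, which is exactly the positivity of $g$ established in Lemma \ref{prop:concave_V} (with $d_2=-d_S$ in the paper's notation). So the exponential factors never spoil the sign, for any $\gamma>0$, and your argument is complete.
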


\begin{proof}
Consider $i\in\{1,2\}$.
First notice that along Lemma \ref{prop:thresholds},
$Y_{i,\gamma}$ is uniquely defined on
the designated interval.
Following Remark \ref{rem:p gamma},
$P_{i,\gamma}$ is a concave non-decreasing
functions of $p_\gamma$.
It is then a decreasing function of $\gamma$.
Since $P_{i,\gamma}(y)$ is decreasing with $\gamma$,
$Y_{i,\gamma}$ is an increasing function of $\gamma$:
the region $(Y_L, Y_{1,\gamma})$ spreads on the right with $\gamma$.
Adapting \ref{eq:Y1} to the present values,
$Y_{i, \gamma}$ shall verify:

$$
q_i (1-e^{\gamma (L(Y_{i, \gamma})-F(Y_{i, \gamma}))}) + q_S (1 - e^{\gamma (S(Y_{i, \gamma})-F(Y_{i, \gamma}))}) = 0
$$
and when $\gamma$ goes to $\infty$,
following \ref{eq: p limit gamma},
we need $L(Y_{i, \gamma})-F(Y_{i, \gamma})$ to go to 0,
so that $Y_{i, \gamma}$ tends toward $Y_F$.
\end{proof}

With risk aversion,
the region (a) takes more importance
and the competitive advantage of
agent one decreases with $\gamma$.
Figure \ref{fig:areas_aversion}
resumes the evolution.

\begin{figure}[ht]
\centering
\includegraphics[width=12cm, height=8cm]{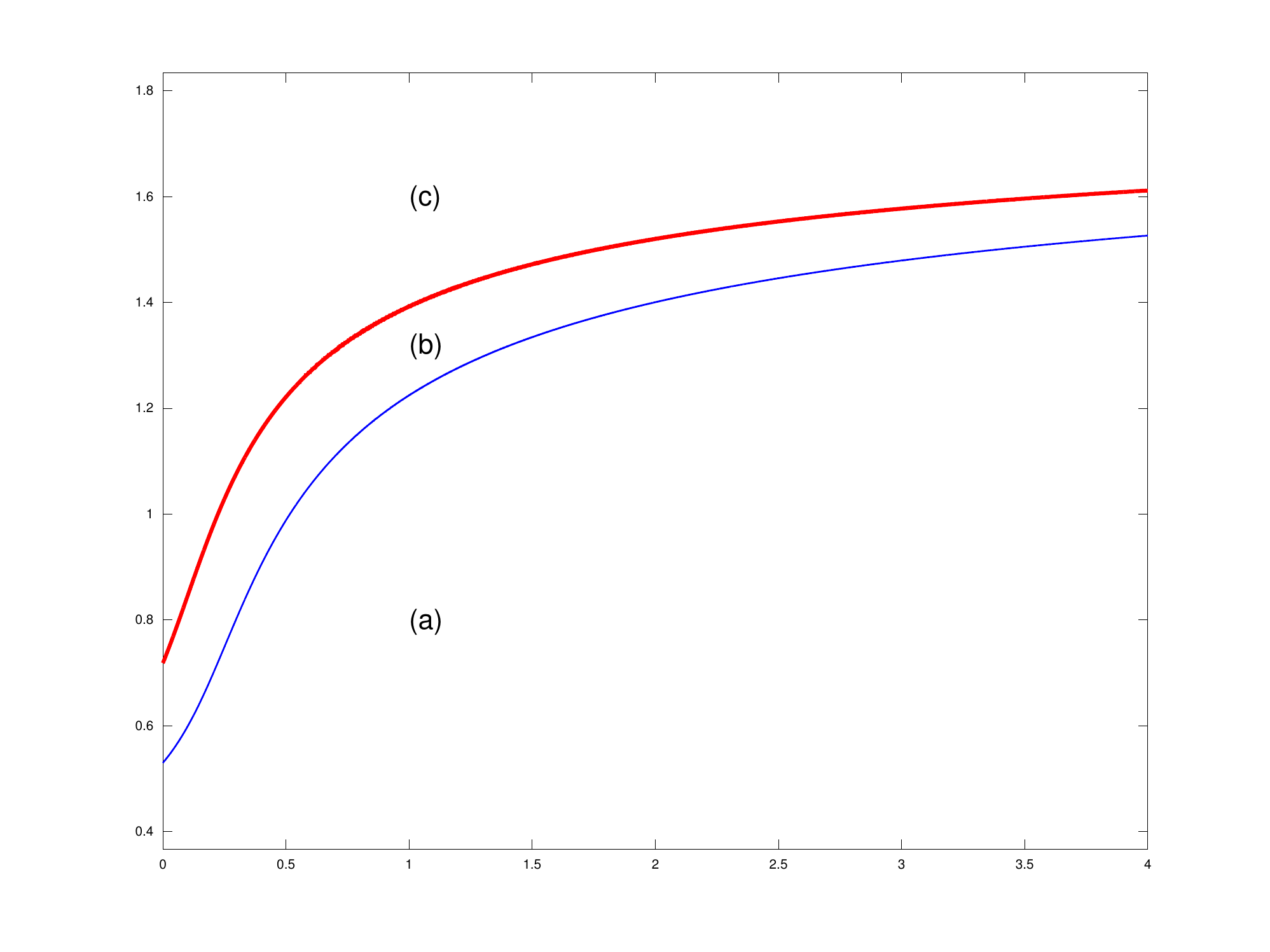}
\caption{
Values of $Y_1$ (blue) and $Y_2$ (red) as a function of risk aversion $\gamma$.
Y-axis limited to $[Y_L, Y_F]=[0.37, 1.83]$.
Limit values of $(Y_1,Y_2)$ for $\gamma$ going to 0
corresponds to $(0.53, 0.72)$ of Figure \ref{fig:regions}.
Same parameters as previous figures.}
\label{fig:areas_aversion}
\end{figure}

The interval $(Y_L, Y_{1,\gamma})$
describing the case (a) is the 
more relevant in terms of strategies,
as the only one to involve mixed strategies.
Propositions \ref{prop:risk on probability} and \ref{prop:modifications}
imply the following result on this interval.

\begin{corollary}
\label{eq: outcomes with gamma}
Assume $y\in [Y_L, Y_{1,\gamma})$.
Let 
$(a_1^{\gamma},a_2^{\gamma},a_S^{\gamma})$
be defined by \ref{eq:a1}, \ref{eq:a2}
and \ref{eq:aS} where $(p_1, p_2)$ is replaced
by $(P_{1,\gamma}, P_{2,\gamma})$. Then
\begin{equation}
\label{eq:lim de a}
\lim\limits_{\gamma\uparrow\infty}a_S =0 \And
\lim\limits_{\gamma\uparrow\infty} \frac{a_1^\gamma}{a_2^\gamma} = 1 \; .
\end{equation}
\end{corollary}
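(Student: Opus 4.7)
The plan is to exploit the explicit CARA form of the utility to pin down the leading-order behaviour of $P_{1,\gamma}$ and $P_{2,\gamma}$ as $\gamma\uparrow\infty$, and then to push this asymptotic information through the rational expressions \ref{eq:a1}--\ref{eq:aS} that define $(a_1^\gamma,a_2^\gamma,a_S^\gamma)$.

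First, using $U(x)=-e^{-\gamma x}$, I would write
\[
l(y)-f(y)=e^{-\gamma F(y)}-e^{-\gamma L(y)},\qquad l(y)-s(y)=e^{-\gamma S(y)}-e^{-\gamma L(y)}.
\]
Proposition \ref{prop:thresholds} gives the strict ordering $S(y)<F(y)<L(y)$ on $(Y_L,Y_F)$, so for any fixed such $y$ one has the equivalences $l-f\sim e^{-\gamma F(y)}$ and $l-s\sim e^{-\gamma S(y)}$, and their ratio $(l-f)/(l-s)$ decays like $e^{-\gamma(F(y)-S(y))}$. Proposition \ref{prop:modifications} ensures that for any such fixed $y$, the inclusion $y\in[Y_L,Y_{1,\gamma})$ holds for all $\gamma$ large enough, so the assumption of the corollary is compatible with sending $\gamma\uparrow\infty$.

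Next, dividing numerator and denominator of \ref{eq:pi_incomplete} by $l(y)-s(y)$ yields, for $i=1,2$,
\[
P_{i,\gamma}(y)=\frac{(l-f)/(l-s)}{q_i\,(l-f)/(l-s)+q_S}.
\]
The numerator vanishes while the denominator tends to $q_S>0$, so $P_{i,\gamma}(y)\to 0$. Crucially, the leading $q_S$ in the denominator does not depend on $i$, hence $P_{1,\gamma}$ and $P_{2,\gamma}$ share the same first-order expansion $q_S^{-1}(l-f)/(l-s)$, so $P_{1,\gamma}/P_{2,\gamma}\to 1$.

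Finally, I would feed these limits into the explicit formulas. The elementary bound $P_1+P_2-P_1P_2\geq\max(P_1,P_2)$, valid on $[0,1]^2$, combined with \ref{eq:aS} gives $a_S^\gamma\leq\min(P_{1,\gamma},P_{2,\gamma})\to 0$; this disposes of the first limit. For the second, a direct manipulation of \ref{eq:a1} and \ref{eq:a2} yields
\[
\frac{a_1^\gamma}{a_2^\gamma}=\frac{P_{1,\gamma}}{P_{2,\gamma}}\cdot\frac{1-P_{2,\gamma}}{1-P_{1,\gamma}},
\]
and both factors converge to one. The only real obstacle is to keep track of the strict exponential dominance $e^{-\gamma S(y)}\gg e^{-\gamma F(y)}\gg e^{-\gamma L(y)}$ carefully enough to justify the asymptotic equivalence of the two denominators; once this is in hand, the argument is purely a bookkeeping exercise with no subtle estimate required.
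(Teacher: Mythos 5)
Your proposal is correct and follows essentially the same route as the paper: both arguments rest on the exponential decay of $p_\gamma=(l-f)/(l-s)\sim e^{-\gamma(F(y)-S(y))}$ (the content of Proposition \ref{prop:risk on probability}), the consequent convergence $P_{i,\gamma}=p_\gamma/(q_ip_\gamma+q_S)\to 0$ with identical leading term $p_\gamma/q_S$, and elementary algebra on the formulas \ref{eq:a1}--\ref{eq:aS}. The only cosmetic difference is that the paper evaluates the closed form $a_1^\gamma/a_2^\gamma=(q_S-(1-q_2)p_\gamma)/(q_S-(1-q_1)p_\gamma)$ while you factor the ratio as $(P_{1,\gamma}/P_{2,\gamma})\cdot(1-P_{2,\gamma})/(1-P_{1,\gamma})$; your version also quietly corrects the sign slip in \ref{eq:pi_incomplete} and replaces the paper's (inaccurate) monotonicity remark on $a_S$ with the clean bound $a_S\le\min(P_{1,\gamma},P_{2,\gamma})$.
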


\begin{proof}
Denote $p_i^\gamma:=P_{i, \gamma}$ and
$p_\gamma:= (l(y)- f(y))/(l(y)-s(y))$ the
equivalent of $p_0$ in the risk averse setting.
From \ref{eq:aS}, $a_S^\gamma$ is a decreasing function
of both $p_1^\gamma$ and $p_2^\gamma$, 
and the first limit is a straightforward
consequence of \ref{eq:limit risk aversion}.

Plugging \ref{eq:pi_incomplete}
into $a_i^\gamma$,
we obtain
\begin{equation}
\label{eq:a_i}
a^{\gamma}_i = \frac{p_i^\gamma (1-p_j^\gamma)}{p_1^\gamma + p_2^\gamma - p_1^\gamma p_2^\gamma}, 
\quad i\ne j \in \{1,2\}^2 \; ,
\end{equation} 
and differentiating in $p_\gamma$,
we obtain that $a^{\gamma}_i$
is increasing in $\gamma$.
It also follows that
\begin{equation}
\label{eq:rapport de force1}
\frac{a^{\gamma}_1}{a^{\gamma}_2} 
= \frac{p^{\gamma}_1 - p^{\gamma}_1 p^{\gamma}_2}{p^{\gamma}_2 - p^{\gamma}_1 p^{\gamma}_2} 
= \frac{q_S -(1-q_2) p_\gamma}{q_S -(1-q_1) p_\gamma}\le 1 \; .
\end{equation}
The second limit of \ref{eq:lim de a} 
follows immediately from Proposition \ref{prop:risk on probability}.
\end{proof}

\begin{remark}
We can easily assert as in Proposition \ref{prop:risk on probability}
that $a_S^\gamma<a_S$ for any $\gamma>0$, where $a_S$
is the probability of simultaneous action in the game
for risk-neutral agents.
Equivalently,
\begin{equation}
\frac{a_1}{a_2}
=\frac{F(y)-S_1(y)}{F(y)-S_2(y)}< \frac{a_1^\gamma}{a_2^\gamma}, \quad \forall \gamma>0\; .
\end{equation}
\end{remark}

The above results can be interpreted as follows.
The parameter $\gamma$ is an aversion
for the uncertainty following from the
coordination game and the regulatory intervention.
As expected,
the higher this parameter, the lower
the probability to act $P_{i, \gamma}$ in the game.
However the game being infinitely repeated until
action of at least one agent,
for values of $P_{i, \gamma}$ lower than one,
the game is extended to a bigger interval $[Y_L, Y_{1,\gamma})$
with $\gamma$.
Then, it naturally reduces simultaneity of investment,
but tends also to an even chance of becoming
a leader for both agents.
There is thus a consequence
to a high risk-aversion $\gamma$:
agents synchronize to avoid playing the game
and the regulator decision.
In some sense,
the behavior in a Stackelberg competition
is the limit with risk-aversion of
the behavior of competitors in a Cournot game.

How does $\gamma$ impact
the outcome of the
game?
As above
since we are in complete market,
the risk aversion makes no relevant
modification to the cases
where the coordination game is not played.
On the interval $(Y_L, Y_{1,\gamma})$
then, we need to compare 
the expected values of
options $L, F$ and $S$
to homogeneously quantities.

\begin{proposition}
\label{def:indifference prices}
For $y\in [Y_L, Y_{1, \gamma})$,
let $E_1^{\gamma}(y)$ be the expected utility
of agent one in the coordination game
when strategies $(P_{1,\gamma}(y), P_{2,\gamma}(y))$ are played.
Then the indifference value of the game for
agent one is given by
\begin{equation}
\label{eq:indifference price}
e_{1,\gamma}(y):= U^{-1}(E_1^{\gamma}(y))
=-\frac{1}{\gamma}\log\left( -a^{\gamma}_1 l(y)-a^{\gamma}_2 f(y)-a^{\gamma}_S s_i(y) \right) \; .
\end{equation}
We define $e_{2,\gamma}$ similarly.
Assume now $q_S>0$. 
For $y\in [Y_L, Y_{1, \gamma})$, we have for $i=1,2$ that
\begin{equation}
\label{eq:indifference prices}
e_{i,\gamma}(y)= E_i (y)=F(y) \quad \forall \gamma>0 \; ,
\end{equation}
with $E_{i}(y)$ defined in \ref{eq:expected payoffs}.
\end{proposition}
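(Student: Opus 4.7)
The plan is to show that the indifference value of the coordination game, for a fixed risk aversion $\gamma>0$ and an outcome $y$ in the mixed-strategy region $[Y_L, Y_{1,\gamma})$, coincides with the follower's payoff $F(y)$. The key observation is that the probabilities $P_{i,\gamma}$ in \ref{eq:pi_incomplete} are, by construction, the counterparts in the utility variables $(l,f,s)$ of the probabilities $P_i$ of \ref{eq:V1}; the paper has already noted that the whole analysis of Section \ref{sec:nash} transposes verbatim with $(L,F,S)$ replaced by $(l,f,s)$. Reapplying the partial-derivative argument of \ref{eq:partial_derivative} to the utility expected payoff $E_1^\gamma$, one checks that the critical value of $p_2$ annihilating $\partial E_1^\gamma/\partial p_1$ is exactly $P_{2,\gamma}(y)$. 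Therefore, at the mixed equilibrium, agent one's expected utility $E_1^\gamma(p_1, P_{2,\gamma})$ does not depend on $p_1$.

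Exploiting this indifference, I would evaluate $E_1^\gamma$ at the most convenient value $p_1=0$. Plugging $p_1=0$ into the expressions \ref{eq:a1}, \ref{eq:a2} and \ref{eq:aS} (written in $\gamma$-variables) gives immediately $a_1^\gamma=0$, $a_2^\gamma=1$ and $a_S^\gamma=0$. Substituting into \ref{eq:expected_utility_incomplete} yields
\begin{equation*}
E_1^\gamma(P_{1,\gamma}(y), P_{2,\gamma}(y)) = E_1^\gamma(0, P_{2,\gamma}(y)) = f(y) = U(F(y)).
\end{equation*}
Applying $U^{-1}$ (which is well-defined and strictly increasing since $U$ is strictly concave and strictly increasing) produces $e_{1,\gamma}(y)=F(y)$. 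The symmetric argument with roles of agents exchanged, using that $P_{1,\gamma}(y)$ renders agent two indifferent in $p_2$, gives $e_{2,\gamma}(y)=F(y)$.

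It remains to match this with the claim $E_i(y)=F(y)$ in the risk-neutral setting. But this is precisely the rent equalization computation of \ref{eq:expected payoffs}: when $q_S>0$ and $y\in(Y_L,Y_1)$, the equilibrium payoff of both risk-neutral agents playing $(P_1,P_2)$ reduces to $F(y)$. Combining the two identities yields $e_{i,\gamma}(y)=E_i(y)=F(y)$ as claimed.

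There is no serious obstacle: the result is a direct manifestation of the Fudenberg–Tirole rent equalization principle, which is preserved under the utility transformation because $P_{i,\gamma}$ is designed to make the opponent indifferent. The only point deserving care is the verification that $P_{i,\gamma}$ indeed sets $\partial E_j^\gamma/\partial p_j$ to zero for $j\neq i$; this is a routine rewriting of the computation behind \ref{eq:partial_derivative} in the utility variables and uses nothing beyond the strict monotonicity of $U$.
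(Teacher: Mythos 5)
Your argument is correct, and it reaches the paper's key identity \ref{eq:sum of prices} by a different mechanism. The paper's proof is a direct computation: it takes the explicit equilibrium probabilities \ref{eq:a_i}, proceeds ``as at the end of Section \ref{sec:nash}'' (i.e., the algebraic verification that plugging \ref{eq:outcomes_complete} into \ref{eq:expected utility} collapses the weighted sum to the follower value), and transposes that identity to the utility variables $(l,f,s)$. You instead invoke the indifference property of the mixed equilibrium: since $P_{2,\gamma}$ is by construction the root of the numerator of $\partial E_1^{\gamma}/\partial p_1$ (the $\gamma$-analogue of \ref{eq:partial_derivative}), the map $p_1\mapsto E_1^{\gamma}(p_1,P_{2,\gamma})$ is constant, and evaluating at the pure strategy $p_1=0$ gives $(a_1^{\gamma},a_2^{\gamma},a_S^{\gamma})=(0,1,0)$ and hence $E_1^{\gamma}=f(y)$ with no algebra at all; applying $U^{-1}$ then yields $F(y)$, and the risk-neutral half of the claim is exactly \ref{eq:expected payoffs}. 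What your route buys is economy and a transparent link to rent equalization (the equilibrium value must equal the value of any strategy against which the opponent's mixing makes you indifferent, in particular deferring); what the paper's route buys is the explicit form of the outcome probabilities, which it reuses elsewhere. Two small points worth a sentence in a polished write-up: (i) the evaluation at $p_1=0$ requires $P_{2,\gamma}(y)>0$, so the endpoint $y=Y_L$ (where $P_{i,\gamma}=0$ and the ratios \ref{eq:a1}--\ref{eq:aS} are singular at the origin) has to be handled by continuity or by Lemma \ref{lem:YL}, exactly as the paper itself tacitly does; (ii) one should note that the denominator of \ref{eq:pi_incomplete} as printed carries a sign slip relative to \ref{eq:V1}, and your reading of $P_{i,\gamma}$ as the exact $(l,f,s)$-counterpart of $P_i$ is the intended one.
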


\begin{proof}
Using \ref{eq:a_i},
we can proceed as in the end of Section \ref{sec:nash} to retrieve
\begin{equation}
\label{eq:sum of prices}
a_{1,\gamma}l(y)+ a_{2,\gamma}f(y)+a_{S,\gamma}s_i(y) = f(y), \quad i=1,2 \; ,
\end{equation} 
and \ref{eq:indifference prices} follows from the definition of $e_{i,\gamma}$.
\end{proof}

Risk-aversion in complete market has thus
the interesting property to keep true the
rent equalization principle of Fudenberg and Tirole \cite{fudenberg1985}:
agents adapt their strategies to be indifferent
between playing the game or
obtaining the follower's position.

\section{Criticism and extensions}
\label{sec:conclusion}

If real option games model
are to be applied,
the role of a regulator
shall be introduced in order
to suit realistic situations of competition.
Indeed regulators often intervene
for important projects in energy,
territorial acquisition and
highly sensitive products such as drugs.
Despite its simplicity and
its idealization,
the present model has still something to say.
One can see it as
an archetype model,
unifying mathematically
the Stackelberg and the Cournot
competition frameworks
and leading to simple formulas.
The model we proposed
is a first attempt,
and could be improved on several grounds.
Three main objections can be raised.

As observed in Remark \ref{rem:reduction} of Section \ref{sec:nash},
the extention of strategies in \ref{sec:strategies}
in the fashion of Fudenberg and Tirole \cite{fudenberg1985}
has unrealistic and strongly constrained
implications due to continuous time and perfect information.
This is emphasized here in the interpretation that the regulator can
only intervene when agents simultaneously invest.
A prioritize objective
would be to propose a new setting
for the standard game
conveying a dynamical dimension to
strategies.

A realistic but involved
complication is the influence
of explicit parameters on
the law $\P^+$,
parameters on which agents
have some control.
The value of being preferred,
introduced as a financial option in subsection \ref{sec:weak},
would then provide a price
to a competitive advantage and
to side efforts to satisfy
non-financial criteria.
Eventually, a game
with the regulator as a third player
appears as a natural track of inquiry.

Finally,
the introduction of
risk-aversion should not
be restrained to the coordination game.
This issue is already tackled in 
Bensoussan \textit{et al.} \cite{bensoussan2010}, 
and Grasselli \textit{et al.} \cite{grasselli2013}
for the incomplete market setting.
But as a fundamentally different source
of risk, the coordination game
and the regulator's decision shall
be evaluated with a different risk-aversion parameter,
as we proposed in Section \ref{sec:aversion}.
An analysis of asymmetrical risk aversion
as in Appendix C of Grasselli \textit{et al.} \cite{grasselli2013}
shall naturally be undertaken.

\end{document}